\documentclass[12pt]{article}
\usepackage{graphicx} 
\usepackage{epsfig}
\usepackage{amsmath, amsfonts, amsthm,amssymb}
\usepackage{latexsym}
\usepackage{xcolor}
\usepackage{mathrsfs}
\usepackage{natbib}
\usepackage{hyperref}
\usepackage{multirow}
\usepackage{appendix}
\usepackage{dsfont}
\usepackage[left=3cm,top=4cm,right=3cm,bottom=3cm]{geometry}
\RequirePackage[caption=false]{subfig}
\usepackage{lscape}
\usepackage{dsfont}
\usepackage{rotating}
\usepackage{ulem}
\newtheorem{prop}[equation]{Proposition}

\usepackage{mathabx}
\usepackage{calligra}

\usepackage{natbib}
\usepackage{amsmath}
\usepackage{dsfont}
\usepackage{subfig}
\usepackage{xcolor}
\usepackage{ulem}
\usepackage{amsfonts}
\usepackage{booktabs}
\usepackage{float}

\newcommand{\ee}{\mathds{E}}
\newcommand{\1}{\mathds{1}}
\DeclareMathOperator{\card}{card}
\DeclareMathOperator{\de}{\mathrm{d}}
\DeclareMathOperator{\tf}{\mathfrak{t}}
 
\usepackage{calligra}

\newcommand{\hypref}[1]{\hyperref[#1]{H\ref{#1}}}
\usepackage{tikz}
\usetikzlibrary{arrows.meta, positioning}

\begin{document}

\title{\LARGE \bf{Testing the Structural Properties of Marked Point Processes Using Local Inhomogeneous Mark-Weighted K-Functions} }
\maketitle
\begin{center}
{{\bf Nicoletta D'Angelo$^{1}$},  {\bf Giada Adelfio$^{1}$} and {\bf Matthias Eckardt$^{2}$}}\\
\noindent $^{\text{1}}$ Department of Economics, Business and Statistics, University of Palermo, Palermo, Italy\\
\noindent $^{\text{2}}$ Chair of Statistics, Humboldt-Universit\"{a}t zu Berlin , Berlin, Germany 
\end{center} 
\begin{abstract}
This work proposes $\chi^2$-type test statistics to assess different hypotheses on the local structure of an observed marked point pattern. The test statistics is based on the local inhomogeneous extension of the mark-weighted $K$-function to investigate local behaviour of the marked point pattern. The summary statistic captures interactions between marks and locations by assessing local contributions to global deviations from independence or homogeneity. The methodology proves to be effective in identifying both global and localised departures from the null hypotheses, even in scenarios with subtle mark structures or small sample sizes. Real-world environmental applications to forestry and earthquake data demonstrate the utility of the proposed framework for detecting spatially dependent marked structures in the patterns.

\end{abstract}
{\it Keywords:  local characteristics, spatial statistics, point processes, $K$-function, hypothesis testing}

\section{Introduction}

In the analysis of marked spatial point processes, a fundamental objective involves selecting a suitable point process model, enabling precise estimation and forecasting of spatial phenomena derived from observed point patterns. The academic discourse acknowledges the critical role of combining marks, points, and their integrative analysis.

However, the complexities of their interactions remain inadequately explored and necessitate comprehensive evaluation through summary characteristics. Typically, these models presuppose that marks (observations) and their spatial coordinates (points) are mutually independent. This assumption was scrutinized by \cite{guan2006tests}, who devised innovative graphical and formal testing techniques. Previously, \cite{schoenberg2004testing} concentrated on the nonparametric evaluation of separability in spatial-temporal marked point processes, illustrating the efficacy of Cramer–von Mises-type tests in identifying minor deviations from separability. When these techniques were applied to wildfire data from Los Angeles County, they revealed fire clustering by similar sizes over periods extending up to 3.9 years, thereby casting doubt on the appropriateness of the separability hypothesis. 

Understanding the interplay between marks and points is crucial, as it guides the selection of point process models that incorporate specific assumptions regarding the distribution of marks relative to their points' locations. In scenarios where marks are shown to be independent of point locations, an unmarked point process model is appropriate. In contrast, where evidence suggests mark-location dependency, it becomes imperative to consider a marked point process model \citep{tarantino2025modeling}. The degree of separability between marks and points informs the exploration of various models. Global summary statistics, such as the $K$-function, serve as prevalent tools in characterizing spatial interactions within point processes. Yet, their components, known as Local Indicators of Spatial Association (LISA) \citep{anselin:95}, possess the capability to reveal local structures and outliers. Initially designed for areal datasets (for instance, see \cite{anselin1996chapter,getis:ord:92}), these local statistics have been adapted for application in spatial point processes, as illustrated by the local $K$-function \citep{getis:franklin:87}. Progress in this area has led to the development of the local product density \citep{mateu:lorenzo:porcu:10}, with subsequent application in areas such as image analysis \citep{mateu:lorenzo:porcu:07} and epidemiology \citep{moraga:montes:11}. This concept has further evolved into the spatio-temporal domain through LISTA functions \citep{siino2018testing}, facilitating analyses of clustering on Euclidean domains \citep{adelfio2020some} and linear networks \citep{dangelo2021assessing}, as well as model fitting \citep{dangelo2021locally,d2024constructed}. 

Within marked point processes, these local statistics provide insights that complement those yielded by global summary statistics. \cite{d2024local} have formalised a new class of summary statistics called \textit{local $t$-weighted marked $n$-th order inhomogeneous $K$-functions}, extending beyond the second order for general marked point processes. Recently, \citet{LIMA} introduced \textit{Local Indicators of Mark Association} (LIMA) which unify various local mark correlation functions for real- and object-valued marks in a general framework. The type of local dependence structures these summary statistics encapsulate is dictated by the choice of the weight, the so-called \textit{test function}, applied. \cite{d2024local}  established fundamental properties of these functions, demonstrating how these statistical tools could generate a broad array of existing summary statistics for marked point processes. 
Subsequently, the local $t$-weighted marked $n$-th order inhomogeneous $K$-functions were utilized to develop a novel local test intended to assess random labeling, focusing on evaluating potential dependencies of functional marks based on spatial point distribution, with a permutation testing framework. 

In our research, we devised a novel statistical testing procedure based on the mark-weighted $K$ function by adopting a $\chi^2$-test formulation. Using an empirical distribution of the proposed method under independence assumption, the proposed test is intended to provide detailed guidance for formally evaluating critical structures within the point patterns, mark distributions, and their collective aspects. Notably, this methodology encompasses both the traditional mark-weighted $K$-function and extends further to identify local structures using the local mark-weighted $K$-function. The reasoning behind employing a formal test statistic derived from the $K$-function instead of solely relying on summary attributes of the marks is rooted in its ability to address potential inhomogeneities in the foundational point pattern. The efficacy of this test is evaluated through simulation studies, and two applications are examined to cover the possible scenarios. These analyses substantiated that the test capably detects mark-dependence structures in marked point patterns, showcasing improvements as both the number of points and the intensity of mark-dependence increase.
All the analyses are performed through the statistical software R \citep{R} and are available from the authors.

The organization of this paper can be delineated in the following manner. Section \ref{sec:prel} introduces essential preliminaries concerning marked point processes and their associated summary statistics, particularly focusing on the \textit{locally mark-weighted $K$-function}. In Section \ref{sec:test}, we introduce and describe the proposed test, offering both its global and local variants. Section \ref{sec:sims} is dedicated to simulations to assess the proposed test methodologies. In Section \ref{sec:appls}, we explore two environmental applications, illustrating the potential scenarios. Concluding remarks and insights are thoroughly discussed in Section \ref{sec:concl}, synthesizing the findings and implications of our study.

\section{Preliminaries on marked point processes}\label{sec:prel}

Throughout, let $X=\lbrace(x_i,m(x_i))\rbrace_{i=1}^N$ denote a marked spatial point process on $\mathds{R}^d \times \mathds{M}$ where $\mathds{M}$, the so-called mark space, is assumed to be a complete separable metric space equipped with a finite reference measure $\nu$ on the Borel $\sigma$-algebra $\mathcal{B(\mathds{M})}$. 
Let $\Breve{X}$ denote the ground, i.e., unmarked point process associated Lebesgue measure $\vert A \vert =\int_A\de z$ for Borel sets $A\in\mathcal{B}(\mathds{R}^d)$. Further, we write $b(x,r)$ to denote a closed Euclidean $r$-ball around $x\in\mathds{R}^d$. The Borel $\sigma$-algebra  $\mathcal{B}(\mathds{R}^d \times  \mathds{M})=\mathcal{B}(\mathds{R}^d)\otimes\mathcal{B}(\mathds{M})$ is endowed with the product measure $A\times E\mapsto \vert A \vert \nu(E)$, $A\times E\in\mathcal{B}(\mathds{R}^d \times  \mathcal{M})$, with $\nu(x)$ the Lesbegue measure of $x$.  
Note that, formally, $Y$ can be considered to be a random element in the measurable space $(N_{lf},\mathcal{N})$ of locally finite point configurations/patterns $\mathbf{x}=\lbrace(x_1,m(x_1)),\ldots, (x_n,m(x_n))\rbrace$, $n\geq0$ \citep{daley:vere-jones:08,van2000markov}. For any sets $A\in\mathcal{B}(\mathds{R}^d)$ and $E\in\mathcal{B}(\mathds{M})$, the cardinality of the random set $Y\cap(A\times E)$ is  $Y(A\times E)=\sum_{(x,m(x))\in Y}\1\lbrace(x,m(x))\in A\times E\rbrace$, where $\1$ denotes an indicator function. In general, we assume $X$ to be simple, implying that $Y$ does not contain any coincident points. 

The expected number of points in $A\in \mathcal{B}(\mathds{R}^d)$ and marks in $E \in \mathcal{B}(\mathds{M})$ is
\begin{equation*}
    \Lambda^m(A\times E) = \mathds{E}[\card(X \cap (A\times E))] = \int_{A\times E} \lambda^m((u,m(u))\de u\,\nu(\de m(u)),
\end{equation*}
where $\lambda^m(\cdot)$ is the intensity function of $X$, and governs its spatial distribution. Likewise, the intensity function of the ground process $\Breve{X}$ will be denoted by $\lambda(\cdot)$. 

\subsection{Global and local mark correlation functions}
For $X$ the association of the marks is usually investigated through second-order summary characteristics including so-called \textit{cross}/\textit{dot}-type and mark-weighted versions and various mark correlation functions for categorical and real/object-valued marks, respectively   
\citep[see][for detailed treatment]{Eckardt:Moradi:currrent,Eckardt2024Rejoinder}. In the later case, classic, i.e., global mark summary statistics display the average association/variation among marks as a function of an interpoint distance $r \geq 0$, aiming at uncovering the average space-dependent distributional behaviours for marks. To this end, a test function $\tf:\mathds{M}\times\mathds{M}\mapsto\mathds{R}$ is employed to construct the specific global mark correlation functions in their most general form. Precisely, given the marks at any pair of point at interpoint distance $r$, the global mark correlation functions are constructed by taking the conditional expectation of the test function,
\begin{equation*}
    c_{\tf}(r) = \mathds{E} \left[ \tf \left( m(x), m(y) \right) \mid (x, m(x)), (y, m(y)) \in X \right].
\end{equation*}
Under the assumption of mark independence, i.e., when \(r \to \infty\), we have
\begin{equation*}
    c_{\tf}(\infty) = \int_{\mathds{R}} \int_{\mathds{R}} \tf \left( m(x), m(y) \right) \nu(dm(x)) \nu(dm(y)).
\end{equation*}
Normalising \(c_{\tf}(r)\) by  the conditional expectation of the test function under independent mark \(c_{\tf} = c_{\tf}(\infty)\) gives rise to the so-called \(\tf\)-\textit{correlation function} \(\kappa_{\tf}(r)\) as
\begin{equation*}
    \kappa_{\tf}(r) = \frac{c_{\tf}(r)}{c_{\tf}},    
\end{equation*}
whose precise form and interpretation depends on the specification of the test function under study. By contruction, \(c_{\tf}(r)\) coincides with \(c_{\tf}\) if there is no structure in the marks, i.e. the i.i.d mark assumption. Under the assumption of mark independence, \(c_{\tf}(r)\) coincides with \(c_{\tf}\). Whence, \(\kappa_{\tf}(r)\) equals one, and, therefore, any deviations from unity indicate the existence of associations/variations among marks.
 Note that, due to the distinct
forms of these mark correlation functions, they each have their normalising factors.
For instance, in the case of Stoyan’s mark correlation function with test function $\tf(m(x),m(y))=m(x)m(y)$, we have
\begin{equation*}
 c_{\tf}(\infty) = \int_{\mathds{R}} \int_{\mathds{R}} \tf(m(x),m(y)) \nu (\de m(x)) \nu(\de m(y))=\int_{\mathds{R}} \int_{\mathds{R}} m(x) m(y) \nu (\de m(x)) \nu(\de m(y))=\mu^2  
\end{equation*}
with $\mu^2$ denoting the mean mark squared.

Instead of a global formulation, \cite{LIMA} introduced local text functions with take the mark at a fixed point $x_i$ and the mark of any alternative point as argument.  In particular, they defined an local $\tf$ correlation function 
\begin{eqnarray*}
\kappa_{\tf,i}(r)
=
\frac{
c_{\tf,i}(r)
}{
c_{\tf,i}
}. 
\end{eqnarray*}
with $c_{\tf,i}(r)=\mathds{E}_{(x_i, m(x_i))} 
 \left[
 \tf_{i} (m(x_i),m(y))
\Bigl\vert  
  (y, m(y)) \in X\setminus (x_i, m(x_i)),\ d(x_i, y) =r 
 \right]$ and $c_{\tf,i}=c_{\tf,i}(\infty)$.

\subsection{Global and local mark-weighted $K$-function}

Different from the above construction principle of mark correlation functions, \cite{pettinen1992forest} proposed to include the test functions as a weight into the $K$-function yielding a positive or negative shift of the empirical curve in the case that the mark independence assumption is not fulfilled. In detail, its mark-weighted 
$K$-function takes the form   
\begin{equation}\label{eq:Kmm}
 K_{\tf}(r)
 =
 \frac{1}{\lambda c_{\tf}}
 \ee
 \left[
 \sum_{\substack{x \in X \\ y \in X,\, y \neq x}} 
 \tf \left(
    m(x),m(y)
    \right)
 \1\lbrace d(x,y)\leq r \rbrace
 \right]
\end{equation}
with $\tf$ needed to be defined by the user, given the data under study.  The normalisation is designed so that, under random labelling, $K_{\tf} (r) $ is equal to the Ripley's $K$-function \citep{ripley:76}. Focusing on the general interpretability of $K_{{\tf}}(r)$ we here explicitly restrict on test function 
\begin{equation*}
\tf(m(x),m(y))=m(x) m(y)
\end{equation*} 
with normalising term $c_{\tf}=\mu^{2}$ as applied in Stoyan's mark correlation function. If the product of the marks is on average larger than the expected case under the null at a certain distance $r$, the $K$ function will be positively shifted at that distance, allowing to identify spatially dependent marks and vice versa.  

Noting that \eqref{eq:Kmm} requires stationarity and isotropic assumptions, \cite{d2024local} proposes an extension of the mark-weighted $K$-function ($K_{\tf}$) function  to the inhomogeneous case where the intensity of the process is allowed to vary between the distinct point locations. Precisely, they defined the inhomogeneous $ K_{{\tf}}^{\mathrm{inhom}}$ by 
\begin{equation}\label{eq:InhKmm}
 K_{\tf}^{\mathrm{inhom}}(r)
 =
 \frac{1}{c_{\tf}}
 \ee
 \left[
 \sum_{\substack{x \in X \\ y \in X,\, y \neq x}} 
 \frac{1}{\lambda(x)}
 \tf \left(
    m(x),m(y)
    \right)
 \1\lbrace d(x,y)\leq r \rbrace
 \right].
\end{equation}
Changing the test function into a local version  and rewriting \eqref{eq:InhKmm}  yields 
\begin{equation}\label{eq:InhKmmLocal}
 K_{\tf,i}^{\mathrm{inhom}}(r)
 =
 \frac{1}{c_{\tf,i}}
 \ee
 \left[
 \sum_{\substack{ y \in X\\ y \neq x_i}} 
 \frac{1}{\lambda(x_i)}
 \tf_i(m(x_i)m(y))
 \1\lbrace d(x,y)\leq r \rbrace
 \right]
\end{equation}
with $\tf_i(m(x_i)m(y))=m(x_i)m(y)$, where the location $x_i$ is considered to be fixed as defined in \cite{LIMA} and $c_{\tf,i}=m(x_i)\int m(y)\nu(\de m(y))=m(x_i)\mu_y$. 

In the following, we denote by $K_{{\tf}}^{\mathrm{inhom}}(r)$ the inhomogeneous version of the mark-weighted $K$-function.

A possible estimator of \eqref{eq:InhKmm} is
\begin{equation}\label{eq:InhKmmEst}
\widehat{K}_{\tf}^{\mathrm{inhom}}(r)
=
\frac{1}{c_{\tf}}
\sum_{x\in X}\sum_{\substack{y\in X\\ y\neq x}}
\frac{\tf\big(m(x),m(y)\big)\;\mathbf{1}\{d(x,y)\le r\}}
{\widehat{\lambda}(x)\,\widehat{\lambda}(y)} ,
\end{equation}
and its local version can be obtained through the individual contribution of each point to the global $K$-function, as follows 

\begin{equation}\label{eq:prev}
\widehat{K}_{\tf,i}^{\mathrm{inhom}}(r)
=
\frac{1}{c_{\tf,i}}
\sum_{\substack{y\in X\\ y\neq x_i}}
\frac{\tf_i\big(m(x_i),m(y)\big)\;\mathbf{1}\{d(x_i,y)\le r\}}
{\widehat{\lambda}(x_i)\,\widehat{\lambda}(y)} ,
\qquad i=1,\dots,n.
\end{equation}

Finallt, the \textit{homogeneous} local estimator is obtained by imputing a constant intensity in \eqref{eq:prev} as follows

\begin{equation}\label{eq:K}
\hat{K}_{\tf,i}(r)
 =
 \frac{1}{c_{\tf,i}\hat{\lambda}^2} \sum_{x_i \in X}\sum_{y \neq x_i} \tf_i \left(
    m(x_i),m(y)
    \right)
 \1\lbrace d(x_i,y)\leq r \rbrace.
\end{equation}

\section{A test statistics for marked point process hypothesis testing} \label{sec:test}

This section is devoted to the definition of the proposed statistical tests for marked point process hypothesis testing. In detail, Section \ref{sec:global_test} introduces three systems of hypothesis for assessing the global structure of the marked point patterns, while Section \ref{sec:local_test} extends these to the local setting by the definition of local hypotheses and related statistical tests.

\subsection{Global test}\label{sec:global_test}

We aim to build a test statistic allowing us to assess the following hypothesis system:

\begin{enumerate}
\item[]
\begin{equation*}
\begin{aligned}
\mathrm{H}1 \; &= 
\begin{cases}
\mathcal{H}_{0}: & \text{spatial homogeneity with independent marks} \\
\mathcal{H}_{1}: & \text{deviation from spatial homogeneity and/or mark independence} 
\end{cases} 
\end{aligned}
\end{equation*}
\end{enumerate}

Then, depending on the result of $\mathrm{H}1$, one may proceed with a sequential hypothesis testing as follows. 

\begin{center}
\resizebox{0.9\textwidth}{!}{
\begin{tikzpicture}[
node distance=1.6cm and 2cm,
every node/.style={font=\normalsize},
test/.style={rectangle, draw, rounded corners, align=center, minimum width=2.2cm, minimum height=0.8cm},
result/.style={rectangle, draw, align=center, minimum width=2.8cm, minimum height=0.8cm},
arrow/.style={-Latex}
]

\node[test] (H1) {Test $\mathrm{H}1$};

\node[result, below left=of H1, xshift=-0.5cm] (R1) {Homogeneous points \\ with independent marks};

\node[coordinate, below=of H1] (Bbot) {};

\node[test, below left=of Bbot] (H2) {Test $\mathrm{H}2$};
\node[test, below right=of Bbot] (H3) {Test $\mathrm{H}3$};

\node[result, below left=of H2] (R2) {Inhomogeneous points};
\node[result, below right=of H2] (R3) {Inhomogeneous points \\ with dependent marks};
\node[result, below right=of H3] (R5) {Dependent marks};

\draw[arrow] (H1) -- node[left]{$\mathcal{H}_0$} (R1);
\draw[arrow] (H1) -- node[right]{$\mathcal{H}_1$} (Bbot);

\draw[arrow] (Bbot) -- (H2);
\draw[arrow] (Bbot) -- (H3);

\draw[arrow] (H2) -- node[left]{$\mathcal{H}_0$} (R2);
\draw[arrow] (H2) -- node[right]{$\mathcal{H}_1$} (R3);

\draw[arrow] (H3) -- node[right]{$\mathcal{H}_0$} (R5);
\draw[arrow] (H3) -- node[left]{$\mathcal{H}_1$} (R3);

\end{tikzpicture}
}
\end{center}
The global hypothesis $\mathrm{H}1$ is used to detect the presence of any spatial structure in the marked point pattern. If $\mathrm{H}1$ is rejected, hypotheses $\mathrm{H}2$ and $\mathrm{H}3$ are subsequently tested in order to determine whether the detected structure arises from the marks, the spatial distribution of the points, or both.

Each test employs an appropriate statistic to capture deviations from the corresponding null hypothesis. The outcomes of the sequential tests identify one of the four possible global configurations.

See Figure \ref{fig:enter-label} for an overview of the possible cases, using patterns with an expected number of $100$ points in the unit square. In the first column,  point patterns are represented, either inhomogeneous (first and second rows), or homogeneous (third and fourth rows). The second column displays the local mark-weighted $K$-functions in light blue and the global mark-weighted $K$-function in blue. The \textit{unmarked} versions of such a $K$-function is denoted by $K(r)$ and it is displayed in black. The expectation of $K(r)$ under the assumption of complete spatial randomness and non-spatially dependent marks is in red.  The third column shows the mark-weighted $K$-functions (in blue), $K$-function (in black), and expectation of the $K$-function under the assumption of complete spatial randomness and non-spatially dependent marks (in red). Finally, the fourth column reports $\kappa_{\tf}(r)$. In the first and third rows, the marks are spatially dependent.

The subsequent hypotheses come as follows.

\begin{enumerate}
\item[]
\begin{equation*}
\begin{aligned}
\mathrm{H}2 \; &= 
\begin{cases}
\mathcal{H}_{0}: & \text{spatial homogeneity of the point pattern} \\
\mathcal{H}_{1}: & \text{spatial inhomogeneity of the point pattern}
\end{cases}
\\[1em]
\mathrm{H}3 \; &= 
\begin{cases}
\mathcal{H}_{0}: & \text{spatially independent marks }  \\
\mathcal{H}_{1}: & \text{spatially dependent marks }
\end{cases}
\end{aligned}
\end{equation*}
\end{enumerate}

This framework decomposes the overall assessment into targeted hypotheses, providing a detailed characterization of spatial and mark-dependent structures. In the following subsection, we extend this methodology to a \textit{local} setting, defining hypotheses and test statistics for specific spatial regions to detect localized departures from homogeneity and independence.

To test any of the above hypotheses, suitable test statistics are needed.
The proposed test statistics are based on a discrepancy measure of two $K$-functions.
As is well established and can be seen in Figure \ref{fig:enter-label}, particular behaviors of the $K$-functions and of the mark correlation function correspond to specific structural properties of the point patterns. Therefore, it is intuitive to build a test statistic based on \eqref{eq:InhKmmEst}.

\begin{figure}[!htb]
    \centering
    \includegraphics[width=\textwidth]{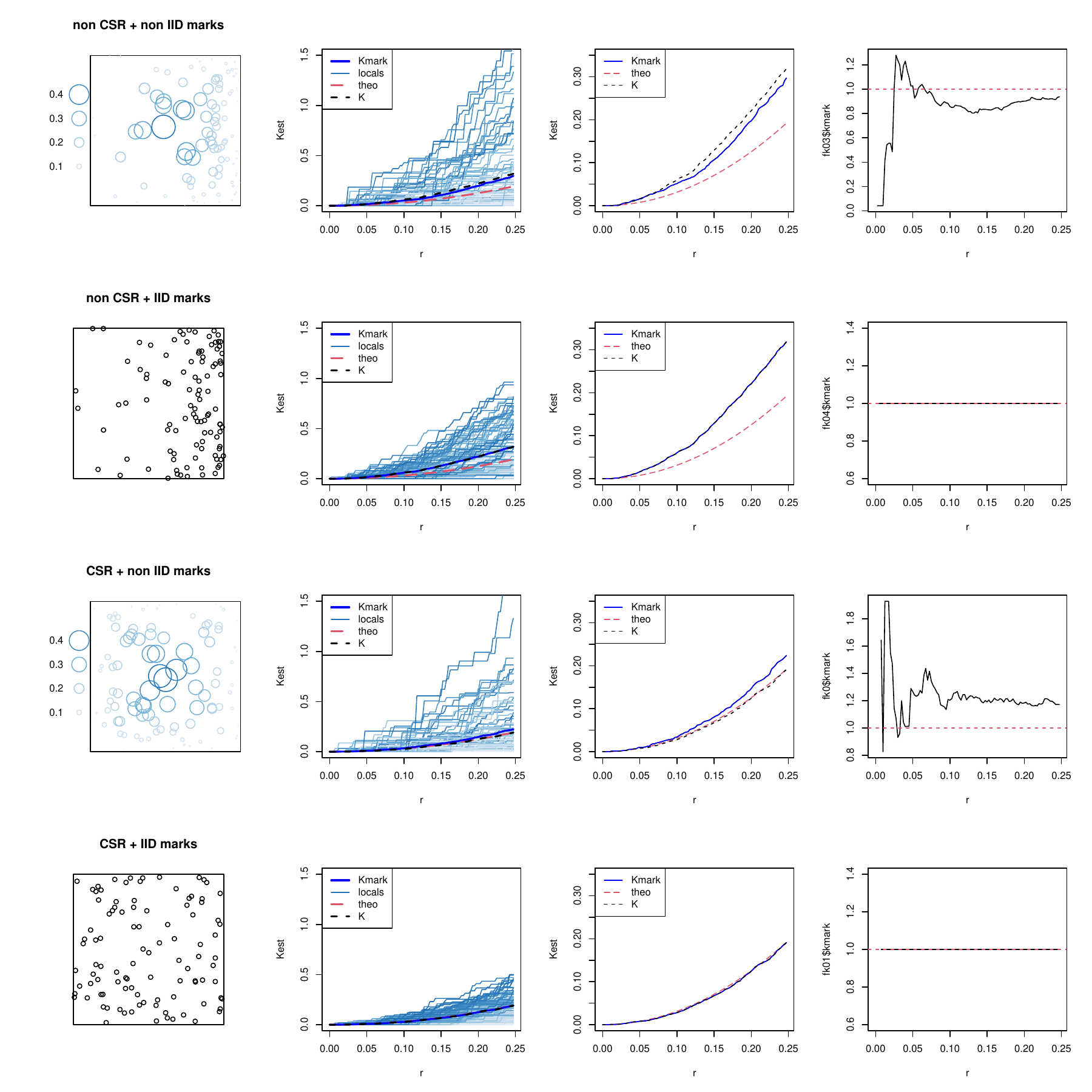}
    \caption{First column: spatial displacement of the point patterns, either inhomogeneous (first and second rows), or homogeneous (third and fourth rows). Second column: Local mark-weighted $K$-functions (in light blue), global mark-weighted $K$-function (in blue), $K$-function (in black), and expectation of the $K$-function under the assumption of complete spatial randomness and non-spatially dependent marks (in red).  Third column: Mark-weighted $K$-functions (in blue), $K$-function (in black), and expectation of the $K$-function under the assumption of complete spatial randomness and non-spatially dependent marks (in red). Fourth column: $\kappa_{\tf}(r)$. First and third rows: the marks are spatially dependent.}
    \label{fig:enter-label}
\end{figure}

Our proposal is therefore of the form
\begin{equation}
 \mathcal{T}=\int_A 
\frac{\big(\hat{K}_{\tf}^{\mathrm{inhom}}(r)-\mathds{E}[\hat{K}_{\tf}^{\mathrm{inhom}}(r)]\big)^2}{\mathds{E}[\hat{K}_{\tf}^{\mathrm{inhom}}(r)]}  \de r,  
\label{eq:chi2}
\end{equation}
where the expectation $\mathds{E}[\hat{K}_{\tf}^{\mathrm{inhom}}(r)]$ depends on the specific hypothesis being tested.

To perform all the above, it remains to determine what the expectation of the marked $K$-function estimator \eqref{eq:InhKmmEst} corresponds to under any of the three alternative hypotheses.

It holds that 
$$K_{\tf}(r) = \mathds{E}\Big[\frac{1}{\lambda(x) c_{\tf}}
 \tf \left(
    m(x),m(y)
    \right)
 \1\lbrace d(x,y)\leq r \rbrace
|
    x \in X
\Big]=$$ $$=\frac{1}{ c_{\tf}}\mathds{E}\Big[\frac{1}{\lambda(x)}
 \tf \left(
    m(x),m(y)
    \right)\1\lbrace d(x,y)\leq r \rbrace
\Big] =$$
$$ =\frac{1}{c_{\tf}}\int_{\mathds{R}^2}  \int_{\mathds{R}}\int_{\mathds{R}} \frac{\tf (
    m(x),m(y)
    )\1\lbrace d(x,y)\leq r \rbrace}{\lambda(x)\lambda(y)}  \lambda(x)\lambda(y)\de x\de y \nu(\de m(x))\nu(\de m(y))=$$
    $$ =\frac{1}{c_{\tf}}\int_{\mathds{R}^2}  \int_{\mathds{R}}\int_{\mathds{R}} \tf (
    m(x),m(y)
    )\1\lbrace d(x,y)\leq r \rbrace \de x\de y \nu(\de m(x))\nu(\de m(y))=$$
    $$= \frac{1}{c_{\tf}}\int_{\mathds{R}^2}   \1\lbrace d(x,y)\leq r \rbrace \de x\de y \int_{\mathds{R}}\int_{\mathds{R}}\tf (
    m(x),m(y)
    )\nu(\de m(x))\nu(\de m(y))$$

Knowing that, 
under the homogeneity assumption, $$\mathds{E}\Biggl[\int_{\mathds{R}^2}   \1\lbrace d(x,y)\leq r \rbrace \de x\de y \Biggr] =  \pi r^2,$$ and that, under mark independence, 
\[
\int_{\mathds{R}} \int_{\mathds{R}} \tf \left( m(x), m(y) \right) \nu(\de m(x)) \nu(\de m(y)) =  c_{\tf},
\]
the following holds.

For homogeneous patterns with non-spatially dependent marks,  $\ee[\hat{K}_{\tf}^{\mathrm{inhom}}(r)] = \pi r^2$, while,
 for inhomogeneous patterns with independent marks, 
$\mathds{E}[\hat{K}_{\tf}^{\mathrm{inhom}}(r)] = K(r),$
and for homogeneous marked patterns,  $\mathds{E}[\hat{K}_{\tf}^{\mathrm{inhom}}(r)] = \pi r^2\frac{c_{\tf}(r)}{c_{\tf}} = \pi r^2 \kappa_{\tf}(r).$

Imputing a constant intensity, we get $\ee[\hat{K}_{\tf}(r)] = \pi r^2$, 
$\mathds{E}[\hat{K}_{\tf}(r)] = K(r),$
and $\mathds{E}[\hat{K}_{\tf}(r)] = \pi r^2\frac{c_{\tf}(r)}{c_{\tf}} = \pi r^2 \kappa_{\tf}(r).$

This gives the three test statistics
$$
 \mathcal{T}_1=\int_{ A}
\frac{\big( \widehat{K}_{\tf}(r) - \pi r^{2} \big)^{2}}
{\pi r^{2}}\mathrm{d}r,
\quad 
 \mathcal{T}_2=\int_{A}
\frac{\big( \widehat{K}_{\tf}(r)
          - K(r) \big)^{2}}
{ K(r)}\mathrm{d}r, 
\quad 
 \mathcal{T}_3=\int_{A}
\frac{\big( \widehat{K}_{\tf}(r)
          - \pi r^{2} \kappa_{\tf}(r) \big)^{2}}
{\pi r^{2} \kappa_{\tf}(r)}\mathrm{d}r.
$$

In particular, $\mathcal{T}_1$ is employed for testing $\mathrm{H}1$, $ \mathcal{T}_2$ is used to  mark dependence ($\mathrm{H}2$), and  $\mathcal{T}_3$ to test test point dependence ($\mathrm{H}3$).

We resort to Monte Carlo simulations to obtain the distribution of the test statistics under any of the null hypotheses we aim to test and to obtain the critical value, consequently.
In particular, we generate 100 new patterns and consider the quantile of the test statistic $\mathcal{T}$ as the threshold value, rejecting whenever  $\mathcal{T}$ is higher than the quantile of the simulated distribution.

\subsection{Local test} \label{sec:local_test}

We know that the conclusion drawn from the application of the above-mentioned global test pertains to the whole analysed process, indicating whether all the marks are randomly labelled or not.
Motivated by the will to further detect the specific points, and regions, where the marks do depend on the other marked points, 
we propose some \textit{local tests}. As in \cite{d2024local}, the main idea is to run a global test on each point of the analysed pattern using the previously proposed \textit{mark-weighted $K$-functions}, to draw different conclusions about the individual points, based on the obtained $p$-values. 

The systems of hypothesis in the local case are therefore referred to the specific point $x_i$ belonging to the point pattern $\textbf{x}$.

Assuming a null hypothesis of absence of structure in the point and in its mark, we wish to test the following hypotheses:
\begin{enumerate}
\item[]
\begin{equation*}
\begin{aligned}
\mathrm{H}1_{L} \; &= 
\begin{cases}
\mathcal{H}_{0}: & \text{local homogeneity with independent marks around } x_i \\
\mathcal{H}_{1}: & \text{deviation from local homogeneity and/or mark independence around } x_i
\end{cases} 
\\[1em]
\mathrm{H}2_{L} \; &= 
\begin{cases}
\mathcal{H}_{0}: & \text{local homogeneity of the point pattern around } x_i \\
\mathcal{H}_{1}: & \text{local inhomogeneity of the point pattern  around } x_i
\end{cases}
\\[1em]
\mathrm{H}3_{L} \; &= 
\begin{cases}
\mathcal{H}_{0}: & \text{local independent marks around } x_i \\
\mathcal{H}_{1}: & \text{local dependent marks around } x_i
\end{cases}
\end{aligned}
\end{equation*}
\end{enumerate}

The local hypothesis $\mathrm{H}1_{L}$ is used to detect the presence of spatial structure at a local scale. 
If $\mathcal{H}_{0}$ is rejected for a given point, additional local tests $\mathrm{H}2_{L}$ and $\mathrm{H}3_{L}$ are performed to determine whether the detected structure is associated with the marks, the spatial distribution of the points, or both.

Also the local test statistics is based on a discrepancy measure of two (local) $K$-functions, and it informs us about which points contributed to the possible global statement about the rejection of the null hypothesis.

In the most general form, this can be tested through the test statistics
\begin{equation}
 \mathcal{T}^{(i)}=\int_A 
\frac{\big(\hat{K}_{\tf,i}^{\mathrm{inhom}}(r)-\mathds{E}[\hat{K}_{\tf,i}^{\mathrm{inhom}}(r)]\big)^2}{\mathds{E}[\hat{K}_{\tf,i}^{\mathrm{inhom}}(r)]}  \de r,  
\label{eq:chi2_local}
\end{equation}
that is, providing test statistics for each $i$-th point of the pattern. In this way, it is possible to obtain a p-value for each point, basically allowing us to assess which points contributed to the rejection of the \lq \lq global" hypothesis, and in particular, allowing us to assess the presence of specific spatial regions where the hypothesis holds or not.

\begin{prop}
Under each of the three reference hypotheses described above, the expected value of the local marked inhomogeneous $K$-function equals that of the global one, that is,
\[
\mathds{E}[\hat{K}_{\tf,i}^{\mathrm{inhom}}(r)] = \mathds{E}[\hat{K}_{\tf}^{\mathrm{inhom}}(r)].
\]
   
\end{prop}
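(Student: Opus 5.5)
The plan is to compute $\mathds{E}[\hat{K}_{\tf,i}^{\mathrm{inhom}}(r)]$ directly under each null hypothesis, in the same way the global expectations were derived above, and then compare with $\pi r^2$, $K(r)$ and $\pi r^2\kappa_{\tf}(r)$. The device that makes this work is the Campbell--Mecke formula, or equivalently conditioning on $x_i$ being a point of $X$ with mark $m(x_i)$, i.e. working under the Palm distribution at $(x_i,m(x_i))$. The local estimator \eqref{eq:prev} is a sum over $y\neq x_i$. Its expectation given $(x_i,m(x_i))$ is therefore an integral against the second-order (Palm) intensity,
\[
\frac{1}{c_{\tf,i}}\int_{\mathds{R}^2}\int_{\mathds{M}} \frac{m(x_i)\,m(y)\,\1\{d(x_i,y)\le r\}}{\lambda(x_i)\lambda(y)}\,\lambda^{(2)}\big((x_i,m(x_i)),(y,m(y))\big)\,\de y\,\nu(\de m(y)).
\]
Throughout, I treat $\hat\lambda$ as the true $\lambda$, as the paper does in its global computation.

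I would then split into the three cases.

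\textbf{First case.} Under $\mathrm{H}1$ (Poisson/CSR with independent marks), Slivnyak's theorem says the reduced Palm distribution is the original process. So $\lambda^{(2)}$ factorises as $\lambda(x_i)\lambda(y)$ times the mark distribution, the intensities cancel, and the mark integral gives $m(x_i)\mu_y$. Dividing by $c_{\tf,i}=m(x_i)\mu_y$ cancels the mark term, which is exactly why this local normalisation was chosen. The remaining expression is $|b(x_i,r)|=\pi r^2$, matching the global value.

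\textbf{Second case.} Under independent marks but an inhomogeneous (second-order intensity reweighted stationary) ground process, the mark factor cancels in the same way. What remains is $\int \1\{d(x_i,y)\le r\}\,g(x_i-y)\,\de y$, which by reweighted stationarity does not depend on $x_i$ and equals $K(r)$.

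\textbf{Third case.} Under a homogeneous ground process with marks possibly dependent on distance, the mark integral conditional on the pair at distance $s=d(x_i,y)$ is $c_{\tf,i}(s)$. After normalisation this gives $\int_{b(x_i,r)}\kappa_{\tf,i}(\|y-x_i\|)\,\de y$. This matches the global value $\pi r^2\kappa_{\tf}(r)$ once one averages over the mark $m(x_i)$, i.e. uses that $\mathds{E}[\kappa_{\tf,i}]$ over the typical point's mark reproduces $\kappa_{\tf}$. Here I would follow the paper's own convention of writing the global answer as $\pi r^2\kappa_{\tf}(r)$.

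The main obstacle is this third case, together with the general issue of ``expectation of a local quantity''. The local function is random through $m(x_i)$, and its normalisation $c_{\tf,i}$ depends on $m(x_i)$, so the equality should be read as an expectation under the Palm distribution of the typical point, averaged over its mark. Under mark dependence the ratio $c_{\tf,i}(s)/c_{\tf,i}$ averaged over $m(x_i)$ need not equal $c_{\tf}(s)/c_{\tf}$ exactly. I would try first to justify it by the identity $\mathds{E}_{m(x_i)}[c_{\tf,i}(s)]=c_{\tf}(s)$ and $\mathds{E}[c_{\tf,i}]=c_{\tf}$, with the normaliser treated as fixed under the null. An alternative route is the summation identity: the global estimator is the average of the local ones (up to the normalising factors), so linearity plus exchangeability of the points under stationarity forces each local expectation to equal the global one. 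I would also note that edge effects are ignored, as in the global derivation.
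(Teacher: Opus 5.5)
\textbf{Gap 1: the Palm step.} Your argument breaks at the displayed formula for the conditional expectation. Given that $(x_i,m(x_i))$ is a point of $X$, the expectation of $\sum_{y\neq x_i}h(y,m(y))$ is the integral of $h$ against the \emph{reduced Palm} intensity $\lambda^{(2)}\big((x_i,m(x_i)),(y,m(y))\big)/\lambda^m(x_i,m(x_i))$. It is not the integral against $\lambda^{(2)}$ itself. What you wrote is the integrand of the unconditional Campbell--Mecke formula for the double sum, which is $\lambda(x_i)$ times the quantity you want.

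Done correctly, Slivnyak gives Palm intensity $\lambda(y)$, and only $\lambda(y)$ cancels. For the estimator \eqref{eq:prev}, which carries the weight $1/(\hat\lambda(x_i)\hat\lambda(y))$,
\[
\mathds{E}^{!}_{(x_i,m(x_i))}\big[\hat K^{\mathrm{inhom}}_{\tf,i}(r)\big]
=\frac{1}{m(x_i)\mu}\int_{b(x_i,r)}\int\frac{m(x_i)\,m(y)}{\lambda(x_i)\lambda(y)}\,\lambda(y)\,\nu(\de m(y))\,\de y
=\frac{\pi r^2}{\lambda(x_i)}.
\]
In your second case the same correction gives $K(r)/\lambda(x_i)$, which does depend on $x_i$.

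Your ``alternative route'' confirms this rather than the claimed equality. The identity $\hat K^{\mathrm{inhom}}_{\tf}(r)=\sum_i \big(m(x_i)/\mu\big)\hat K^{\mathrm{inhom}}_{\tf,i}(r)$ is a weighted \emph{sum} over the $N$ points, not an average. Campbell--Mecke then gives $\mathds{E}[\hat K^{\mathrm{inhom}}_{\tf}(r)]=\int_W\lambda(x)\,\pi r^2/\lambda(x)\,\de x=|W|\pi r^2$. So, ignoring edge effects, a local expectation is the global one divided by $\lambda|W|$.

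The paper's proof does not supply the missing factor either. It integrates against $\lambda(y)$, which is the correct Palm intensity in the Poisson case, and is left with $1/\lambda(x_i)$. It then removes this by asserting $\mathds{E}[\int_{\mathds{R}^2}\1\{d(x_i,y)\le r\}\de y]=\lambda(x_i)\pi r^2$, although that integral is deterministically $\pi r^2$. Your argument is therefore essentially the paper's, with the spurious $\lambda(x_i)$ entering through $\lambda^{(2)}$ instead. A true statement requires changing the object, for instance weighting the local sum by $1/\hat\lambda(y)$ only, as in the usual local inhomogeneous $K$-function. With that weighting, your Slivnyak and reweighted-stationarity arguments, using the correct Palm intensity, give exactly $\pi r^2$ and $K(r)$.

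\textbf{Gap 2: the third case.} This is an independent gap, and ``treating the normaliser as fixed'' is exactly where it fails. Since $\tf_i=m(x_i)m(y)$ and $c_{\tf,i}=m(x_i)\mu$, the factor $m(x_i)$ cancels identically. Hence $\hat K^{\mathrm{inhom}}_{\tf,i}(r)=\mu^{-1}\sum_{y\neq x_i}m(y)\1\{d(x_i,y)\le r\}/(\hat\lambda(x_i)\hat\lambda(y))$ depends only on the neighbours' marks.

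Its expectation therefore involves $\mathds{E}[c_{\tf,i}(s)/c_{\tf,i}]$, the mean mark of a point at distance $s$ from $x_i$ divided by $\mu$. This is a first-moment quantity, not $\mathds{E}[c_{\tf,i}(s)]/\mathds{E}[c_{\tf,i}]=c_{\tf}(s)/c_{\tf}$. For a counterexample, take a homogeneous Poisson ground process with marks $m(y)=Z(y)$ from a stationary random field independent of it. Averaged over $m(x_i)$, the local expectation is $\pi r^2/\lambda$ whatever the covariance function $C$ of $Z$, while $\kappa_{\tf}(s)=1+C(s)/\mu^2$. No averaging over $m(x_i)$ can recover $\kappa_{\tf}$.

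Replacing $c_{\tf,i}$ by the constant $c_{\tf}$, together with the $1/\hat\lambda(y)$ weighting, would make this case work for a Poisson ground process. It would give $\int_{b(x_i,r)}\kappa_{\tf}(\|y-x_i\|)\,\de y=\int_0^r 2\pi s\,\kappa_{\tf}(s)\,\de s$. That coincides with $\pi r^2\kappa_{\tf}(r)$ only if $\kappa_{\tf}$ is constant on $[0,r]$, so writing it that way is not a harmless convention.

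The paper asserts this case without computation. It uses the same identification $\mathds{E}[m(x_i)\mu_y]=\mu^2=c_{\tf,i}$ that you propose.
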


\begin{proof}
Taking the conditional expectation only with respect to $j$, and assuming $\tf_i (
    m(x_i),m(y)
    ) = m(x_i)m(y)$, it holds that 
$$K_{\tf,i}^{\mathrm{inhom}}(r)=
 \frac{1}{c_{\tf,i}\lambda(x_i)}\int_{\mathds{R}^2}  \int_{\mathds{R}} \frac{\tf_i (
    m(x_i),m(y)
    )\1\lbrace d(x_i,y)\leq r \rbrace}{\lambda(y)}  \lambda(y)\de y \nu(\de m(y))=$$
    $$ =\frac{1}{c_{\tf,i}\lambda(x_i)}\int_{\mathds{R}^2}  \int_{\mathds{R}}\tf_i (
    m(x_i),m(y)
    )\1\lbrace d(x_i,y)\leq r \rbrace \de y \nu(\de m(y))=$$
    $$= \frac{1}{c_{\tf,i}\lambda(x_i)}\int_{\mathds{R}^2}   \1\lbrace d(x_i,y)\leq r \rbrace \de y \ m(x_i)\int_{\mathds{R}}m(y)
    \nu(\de m(y))$$

Knowing that, 
under the homogeneity assumption, $$\mathds{E}\Biggl[\int_{\mathds{R}^2}   \1\lbrace d(x_i,y)\leq r \rbrace \de y \Biggr] = \lambda(x_i) \pi r^2,$$ and that, under mark independence, 
\[
\mathds{E}\Biggl[ m(x_i)\int_{\mathds{R}}  m(y)\nu(\de m(y))\Biggr] = \mathds{E}\Biggl[ m(x_i)\mathds{E}[m(y)] \Biggr] = \mathds{E}[ m(x_i)\mu_y ] = \mu^2 = c_{\tf,i},
\]
the following holds.
     
For homogeneous patterns with non-spatially dependent marks ,  $$\ee[\hat{K}_{\tf,i}^{\mathrm{inhom}}(r)] = \ee[\hat{K}_{\tf}^{\mathrm{inhom}}(r)] = \pi r^2.$$ 
For inhomogeneous patterns with independent marks, 
$$\mathds{E}[\hat{K}_{\tf,i}^{\mathrm{inhom}}(r)] = \mathds{E}[\hat{K}_{\tf}^{\mathrm{inhom}}(r)] = K(r).$$
 For homogeneous marked patterns,  $$\mathds{E}[\hat{K}_{\tf,i}^{\mathrm{inhom}}(r)] = \mathds{E}[\hat{K}_{\tf}^{\mathrm{inhom}}(r)] =\pi r^2 \kappa_{\tf}(r).$$
\end{proof}

It consequently holds that 
$$
\mathcal{T}^{(i)}_1=\int_A 
\frac{\big(\hat{K}_{\tf,i}(r)-\pi r^2\big)^2}{\pi r^2} \de r ,
\quad
\mathcal{T}^{(i)}_2=\int_A  
\frac{\big(\hat{K}_{\tf,i}(r)-K(r)\big)^2}{K(r)}\de r,
\quad
\mathcal{T}^{(i)}_3=\int_A  
\frac{\big(\hat{K}_{\tf,i}(r)-\pi r^2 \kappa_{\tf}(r)\big)^2}{\pi r^2 \kappa_{\tf}(r)}\de r,
$$
for the three hypotheses, respectively.

Also for these local scenarios, $\mathcal{T}^{(i)}_1$ is employed for testing $\mathrm{H}1_{L}$, $\mathcal{T}^{(i)}_2$ for $\mathrm{H}2_{L}$, and $\mathcal{T}^{(i)}_3$ for $\mathrm{H}3_{L}$.

\section{Simulation study}\label{sec:sims}

This section is devoted to the assessment of the proposed test statistics. In particular, Section \ref{sec:glob} explores the global scenarios assessing the test performance in terms of power of the test, while Section \ref{sec:loc} explores the local scenarios by assessing the ability to recover \lq \lq local points" through classification rates. 

\subsection{Global test}\label{sec:glob}

For each of the three hypotheses, we simulate 100 spatial point patterns in the unit square under the alternative hypotheses and compute the power of the test. 

First, we consider a homogeneous Poisson process with constant intensity $\lambda \in \{25, 50, 100\}$, directly representing the expected number of points in the unit square window $W=[0,1]^2$. 

Then, we consider an inhomogeneous Poisson process with a linear intensity function of the form
$$
\lambda(x,y)= 10 + \alpha x, \qquad
(x,y) \in [0,1]^2,
$$ 
where we let $\alpha \in \{30, 80, 180
\}$, leading to having $\mathbb{E}[N]= \{25, 50, 100
\}$ as expected numbers of points.

To simulate spatially dependent mark structures, we assign the mark to the point $x_i$ of the point pattern $\textbf{x}$ as the shortest distance $d(x_i,W)$ from the point to the boundary of the window $W$.

Table~\ref{tab:power} presents the estimated power of the global test across a range of simulated scenarios, averaged over 100 replicates. Each setting combines different spatial point pattern structures (either homogeneous or inhomogeneous Poisson processes), varying the degree of mark structure (based on the distance from the boundary), and sample sizes reflecting the expected number of points $\mathbb{E}[N]$. To make the mark dependence vary, we further consider increasing values of the power $h$ of the distance from the boundary $d(x_i,W)^h$. In particular, we will explore three degrees of marks dependence ($h = 1, 2, 3$).

\begin{table}[!htb]
\centering
\caption{Power of the global test over the three hypotheses, number of points, and power of the mark, averaged over 100 replicates.}
\label{tab:power}

\begin{tabular}{cr|ccc}
\toprule
&&\multicolumn{3}{c}{\textbf{Hypothesis}}  \\
\textbf{$\mathbb{E}[N]$} & \textbf{h} & $\mathrm{H}1$ & $\mathrm{H}2$ & $\mathrm{H}3$ \\
\midrule
\multirow{3}{*}{25}  & 1 & 0.40 & 0.63 & 0.09 \\
   & 2 & 0.73 & 0.75 & 0.51 \\
   & 3 & 0.84 & 0.86 & 0.95 \\
 \midrule
 \multirow{3}{*}{50}  & 1 & 0.90 & 0.91 & 0.11 \\
   & 2 & 0.99 & 0.99 & 0.53 \\
   & 3 & 1.00 & 1.00 & 0.94 \\
 \midrule
 \multirow{3}{*}{100} & 1 & 1.00 & 0.99 & 0.10 \\
  & 2 & 1.00 & 1.00 & 0.54 \\
  & 3 & 1.00 & 1.00 & 0.94 \\
\bottomrule
\end{tabular}
\end{table}

Table \ref{tab:power} reports the empirical power of the proposed global test across the three hypothesis systems, three different point structures ($h = 1, 2, 3$), and varying expected number of points ($\mathbb{E}[N]$). Each entry represents the proportion of 100 Monte Carlo replicates in which the null hypothesis was rejected. 
Overall, the power increases with the expected number of points, indicating that larger point patterns improve the ability of the test to detect departures from the null hypothesis. In particular, $\mathrm{H}1$ and $\mathrm{H}2$ display a rapid increase in power as $\mathbb{E}[N]$ grows, reaching values close to one already for moderate sample sizes. In contrast, $\mathrm{H}3$ shows a lower sensitivity for small patterns, especially when $h=1$, although its power also improves as the number of points increases. The lower power of $\mathrm{H}3$, if compared to that of $\mathrm{H}1$ and $\mathrm{H}2$, is likely due to the difficulty of the $K$-function in correctly identifying mark structure over point structure.

\subsection{Local test}\label{sec:loc} 

To assess the ability of local tests in detecting spatially localised patterns in the marks, we simulate spatial point patterns in the unit square $W = [0,1]^2$, where only a small fraction of points exhibit distinctively different marks or clustering behaviour depending on the scenario. The aim is to create scenarios where the global distribution of points and marks remains approximately independent, while local deviations can be detected.
We consider three different levels of global process intensities, corresponding to $\mathbb{E}[N] \in \{25, 50, 100\}$ expected points, generated from a homogeneous Poisson process in $W$. 

When simulating local point structure (i.e. $\mathcal{H}_0$ of hypothesis $\mathrm{H}2_{L}$), we generate a configuration obtained by superimposing a homogeneous background process with a clustered component. Specifically, a background pattern is first simulated from a homogeneous Poisson process with intensity $\lambda$. Then, a clustered point pattern is generated using a Thomas process with parent intensity $\kappa$, mean number of offspring per parent $\mu$, and Gaussian displacement with standard deviation $\sigma = 0.03$. The two patterns are superimposed to obtain the final configuration $X$, resulting in a pattern that contains both randomly scattered points and small spatial clusters. The superposition allows us to identify which points belong to the clustered component. The parameters $\lambda$, $\kappa$, and $\mu$, are selected in order to give $30\%$ of points belonging to the clustered component. 
After generating the locations, marks are assigned independently of the point locations. In particular, all points receive i.i.d.\ marks drawn from a uniform distribution on $[0,1]$. The indicator of cluster membership is stored separately and used only for evaluation purposes. This allows us to assess whether the proposed local test is able to detect points belonging to spatial clusters even when the marks themselves do not carry additional information about the clustering structure.

When simulating scenarios with local mark dependence (i.e. $\mathcal{H}_0$ of hypothesis $\mathrm{H}3_{L}$), we randomly select a small number $k$ of points $\{y_1, \dots, y_k\}$ from the generated configuration, and define local neighborhoods around them as balls $B(y_j, r)$ of fixed radius $r = 0.05$. All points falling in these neighborhoods form the set $Z$, representing the regions with locally modified marks. The remaining points, denoted by $X \setminus Z$, constitute the background pattern.
Marks are assigned as follows. The background points in $X \setminus Z$ receive i.i.d. marks from a standard uniform distribution. Each selected center $y_j$ is assigned a mark $m(y_j) \sim \mathcal{N}(\mu, \sigma^2)$, with $\mu = 5$ and $\sigma^2 = 1$. This ensures that only a few small areas in the window exhibit higher-than-usual marks, while the rest of the pattern remains indistinguishable from a homogeneous marked process. Therefore, a point is considered a true positive if it belongs to $Z$ and is correctly identified by the local test as significant, while false positives correspond to points from $X \setminus Z$ that are incorrectly classified as significant. 

Finally, for the scenario combining both local point inhomogeneity and local mark dependence (i.e. $\mathcal{H}_1$ of hypotheses $\mathrm{H}1_{L}$, $\mathrm{H}2_{L}$, and $\mathrm{H}3_{L}$), the spatial configuration is generated as in the previous setting with clustered locations by superimposing a homogeneous Poisson background process with intensity $\lambda = 35$ and a Thomas cluster and marks are then assigned conditionally on the cluster membership. Background points receive i.i.d.\ marks from a $\mathrm{Unif}(0,1)$ distribution, while points belonging to the clustered component receive marks generated from a Gaussian. In this way, clustered regions correspond simultaneously to areas with higher point concentration and locally increased marks, thus combining the two types of departures from the null hypothesis considered in the previous scenarios. The three scenarios are represented in Figure \ref{fig:local_show}.

\begin{figure}[!htb]
	\centering
	\subfloat[Local point structure]
    {\includegraphics[width=.33\textwidth]{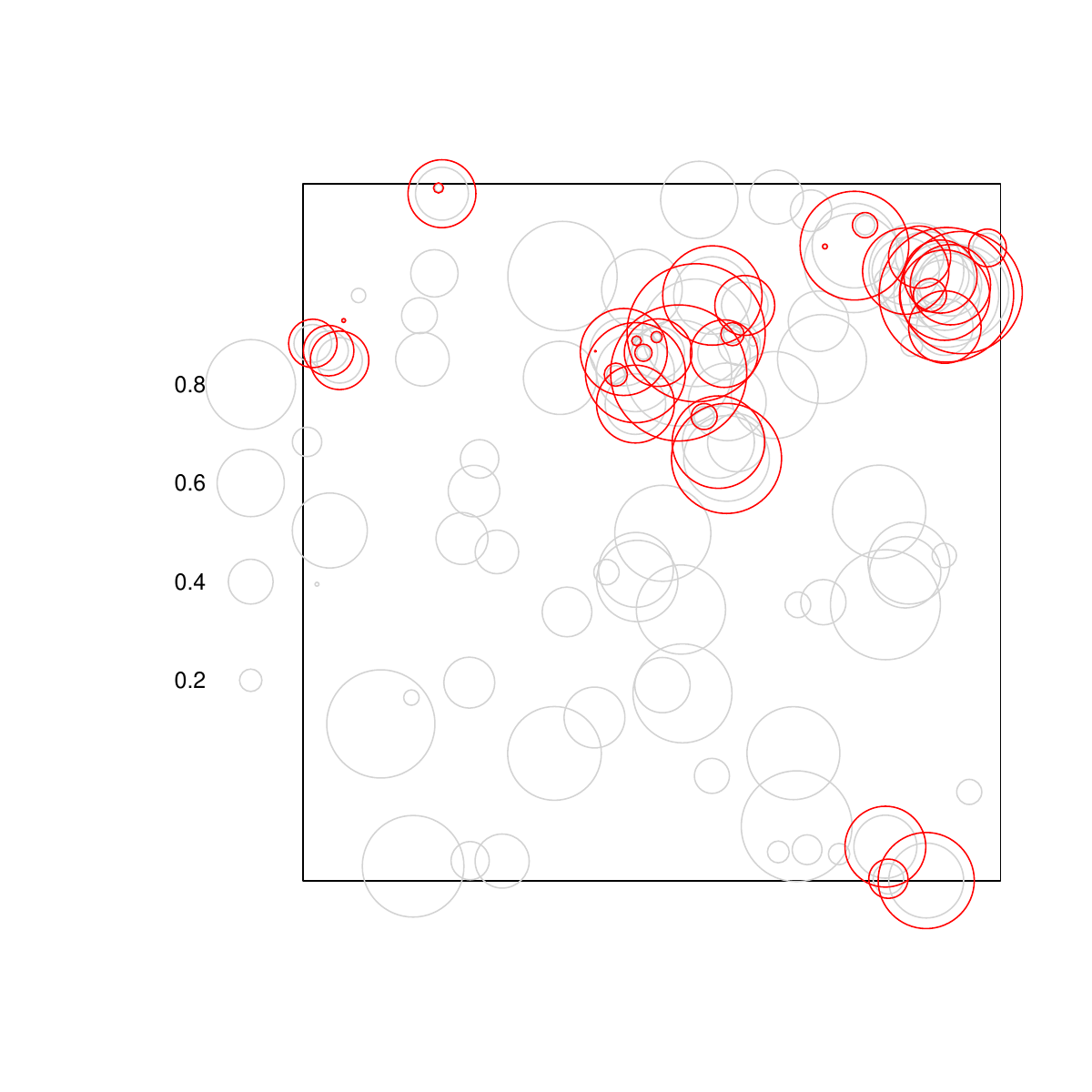}}  
	\subfloat[Local mark structure]
    {\includegraphics[width=.33\textwidth]{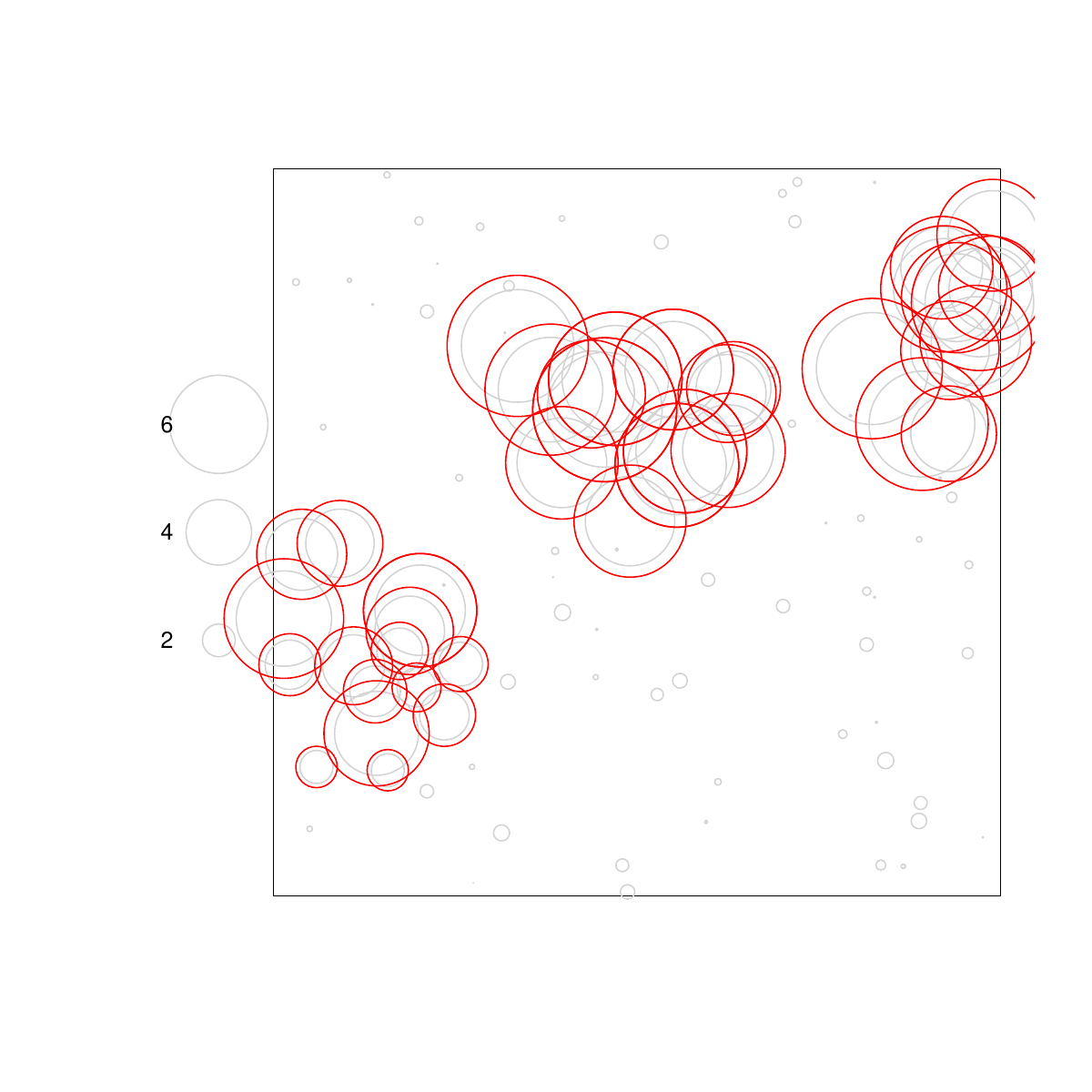}}
    \subfloat[Local point and mark structure]
    {\includegraphics[width=.33\textwidth]{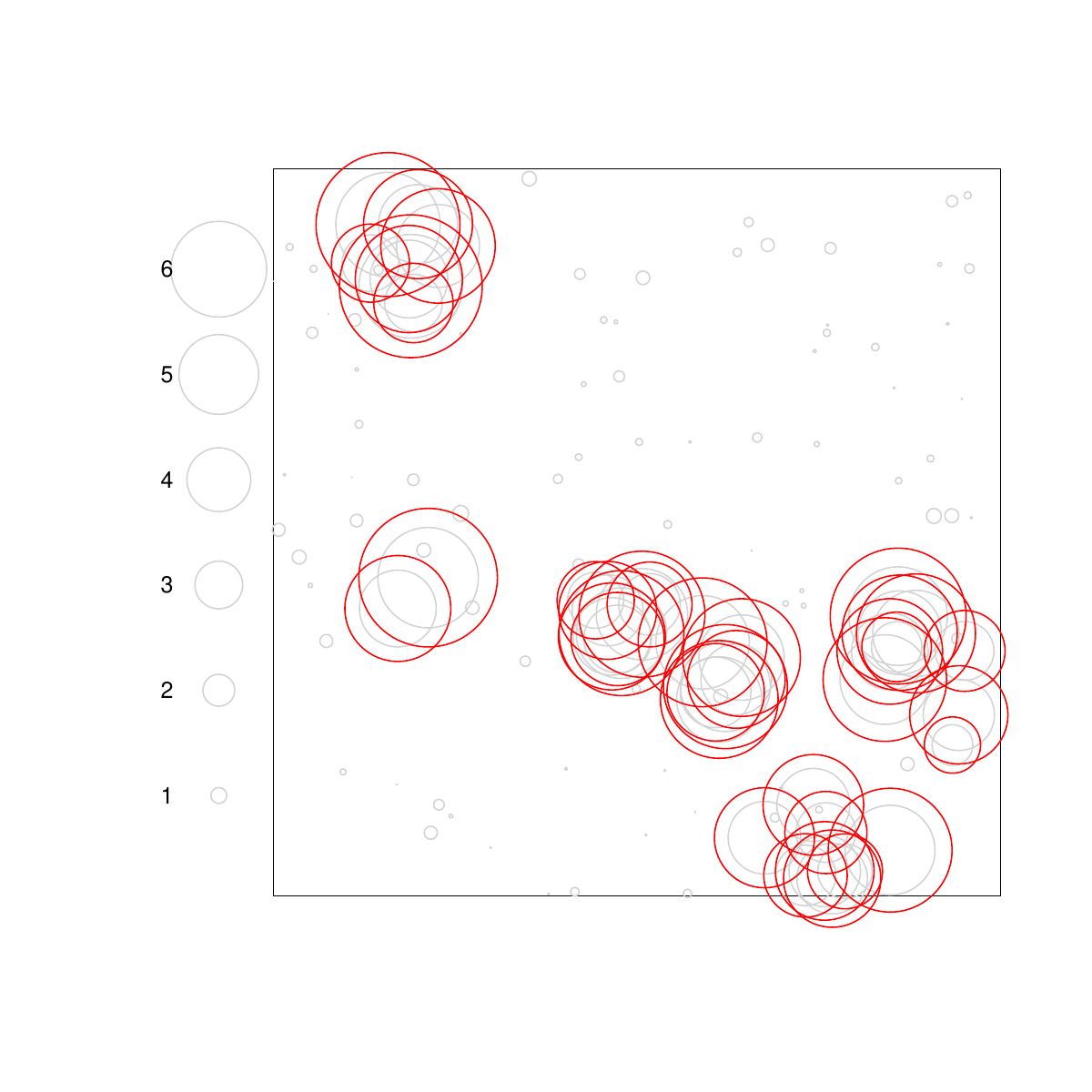}}
	\caption{Simulated patterns for the local scenarios.}
	\label{fig:local_show}
\end{figure}

We repeat the simulation 100 times for each expected number of points ($\mathbb{E}[N]$), and compute standard classification metrics such as true positive rate (TPR), false positive rate (FPR), and accuracy (ACC). 
The results for the local tests, shown in Table \ref{tab:power_local}, provide information on the classification performance. 

\begin{table}[!htb]
\centering
\caption{Classification rates of the local test over the three hypotheses, number of points, and power of the mark, averaged over 100 replicates.}
\label{tab:power_local}

\begin{tabular}{cc|ccc}
\toprule
 & & \multicolumn{3}{c}{\textbf{Hypothesis}} \\
\textbf{$\mathbb{E}[N]$} & \textbf{Metric} 
& $\mathrm{H}1_{L}$
& $\mathrm{H}2_{L}$ 
& $\mathrm{H}3_{L}$ \\
\midrule

\multirow{3}{*}{25}
 & TPR & 0.55 & 0.45 & 0.29 \\
 & FPR & 0.00 & 0.00 & 0.00 \\
 & ACC & 0.81 & 0.76 & 0.70 \\

\midrule
\multirow{3}{*}{50}
 & TPR & 0.68 & 0.60 & 0.32 \\
 & FPR & 0.00 & 0.00 & 0.00 \\
 & ACC & 0.89 & 0.86 & 0.78 \\

\midrule
\multirow{3}{*}{100}
 & TPR & 0.79 & 0.70 & 0.34 \\
 & FPR & 0.00 & 0.00 & 0.00 \\
 & ACC & 0.93 & 0.90 & 0.79 \\

\bottomrule
\end{tabular}
\end{table}

Each entry is averaged over 100 replicates.
Several observations can be drawn. The true positive rate (TPR) generally increases with the number of points, indicating improved sensitivity of the local test in detecting localized departures from the null hypothesis as the sample size grows. The false positive rate (FPR) remains consistently close to zero across all scenarios. As a consequence, the overall accuracy (ACC) also increases with larger $\mathbb{E}[N]$ for all hypotheses. Among the considered scenarios, $\mathrm{H1}_{L}$ and $\mathrm{H2}_{L}$ exhibit the highest TPR and ACC values, whereas $\mathrm{H3}_{L}$ shows lower sensitivity, particularly for smaller patterns, although its performance improves as the number of points increases. Similarly to the global case, the lower performance of $\mathrm{H}3$, if compared to that of $\mathrm{H}1$ and $\mathrm{H}2$, is likely due to the difficulty of the $K$-function in correctly identifying mark structure over point structure.

\section{Applications}\label{sec:appls}

This section is devoted to the application of the proposed tests on two environmental datasets with spatially dependent marks representing different scenarios. In particular, Section \ref{sec:waka} consists of an application to patterns of trees locations marked by their diameter, available from the R \citep{R} package \texttt{spatstat} \citep{baddeley:rubak:tuner:15}, while Section \ref{sec:seismic} presents an application to a clustered pattern of seismic events marked by their magnitude.

\subsection{Forestry data}\label{sec:waka}

The first dataset is a spatial point pattern of 504 trees recorded at Waka National Park, Gabon \citep{balinga2006vegetation,picard2009multi}.
The trees are marked by the tree diameter at breast height (DBH). The survey region is a 100 by 100 square metre. Coordinates are given in metres, while the DBH is in centimetres. Figure \ref{fig:appl1} shows the marked point pattern and the observed unmarked $K$-function, together with its theoretical value under CSR, $\pi r^2$. The $K$-function indicates spatial ramdomness.

\begin{figure}[!htb]
	\centering
	\subfloat[Point pattern]{\includegraphics[width=.33\textwidth]{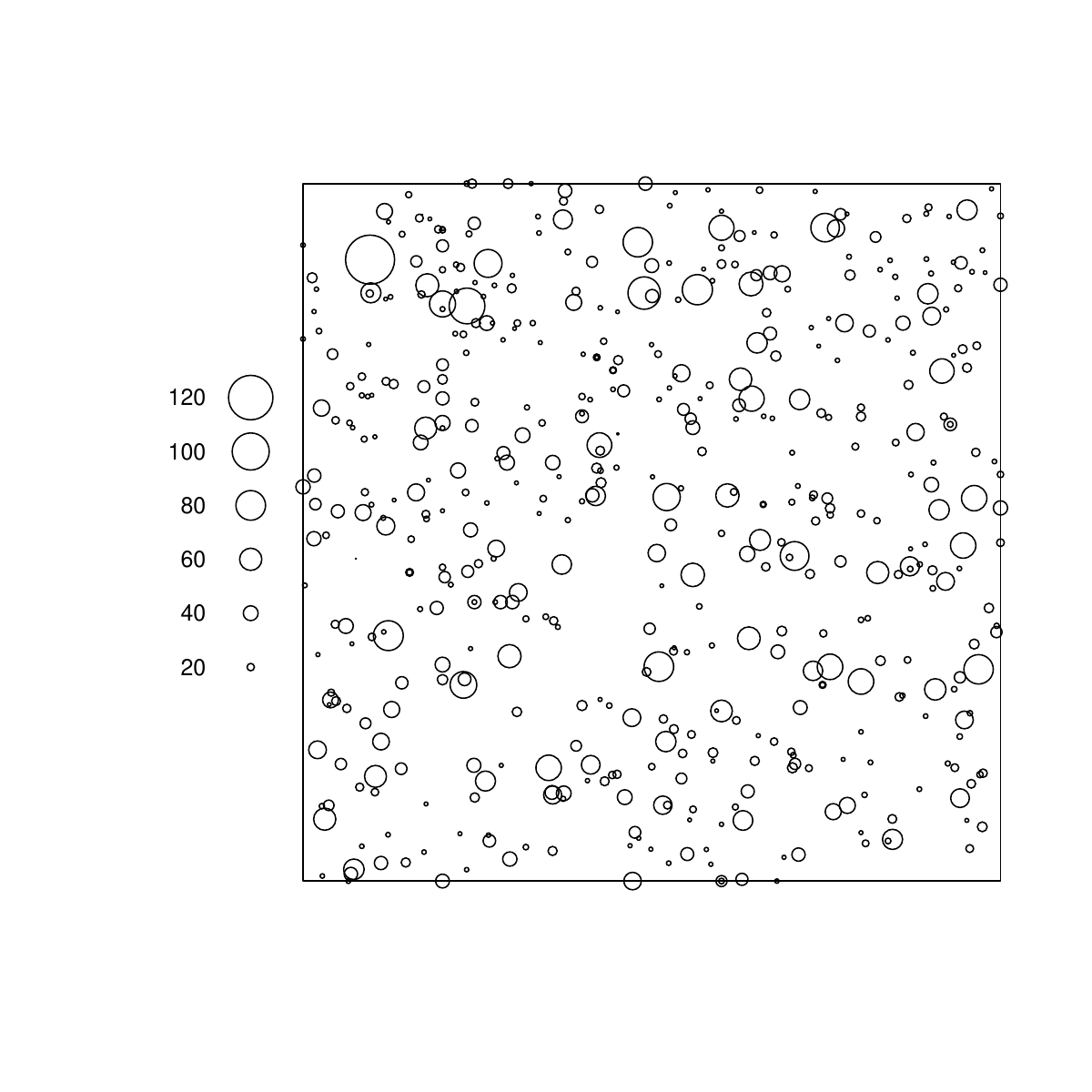}}  
	\subfloat[Unmarked $K$-function]
    {\includegraphics[width=.33\textwidth]{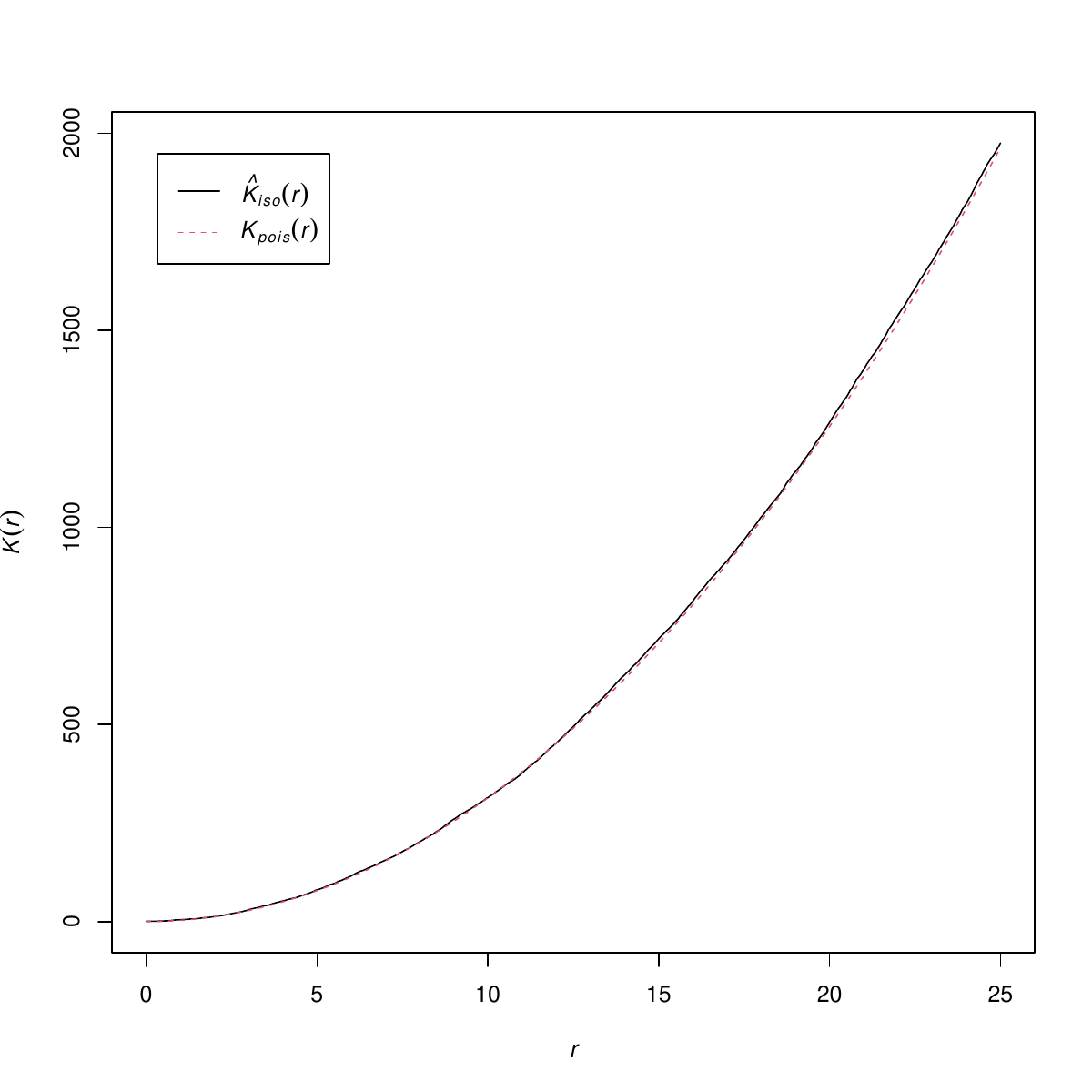}}
	\caption{Point pattern of Waka Trees marked by their diameter (a) and its observed $K$-function (b).}
	\label{fig:appl1}
\end{figure}

We test $\mathrm{H}1$, non rejecting the null hypothesis. We therefore may conclude that the pattern is overall homogeneous and with independent marks. Assessing  $\mathrm{H}1_L$, we get that the significant point are 6\% of the total.

The second dataset records the locations of 126 pine saplings in a Finnish forest marked by their heights and their diameters. Figure \ref{fig:finpines}(a)-(b) shows the marked point pattern and the observed unmarked $K$-function, together with its theoretical value under CSR, $\pi r^2$. The $K$-function indicates some global clustering of points.

\begin{figure}[!htb]
	\centering
	\subfloat[Point pattern]{\includegraphics[width=.33\textwidth]{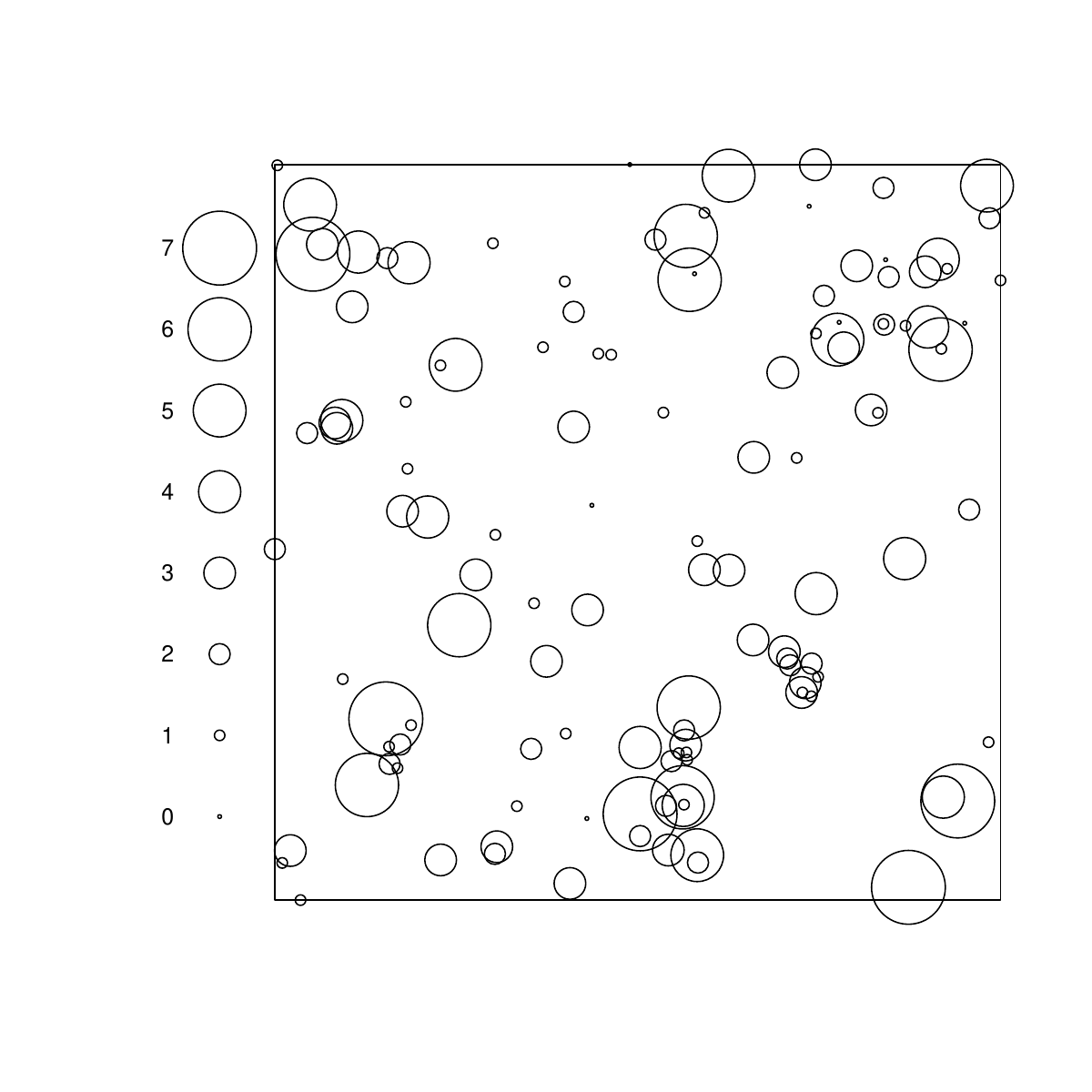}}  
	\subfloat[Unmarked $K$-function]
    {\includegraphics[width=.33\textwidth]{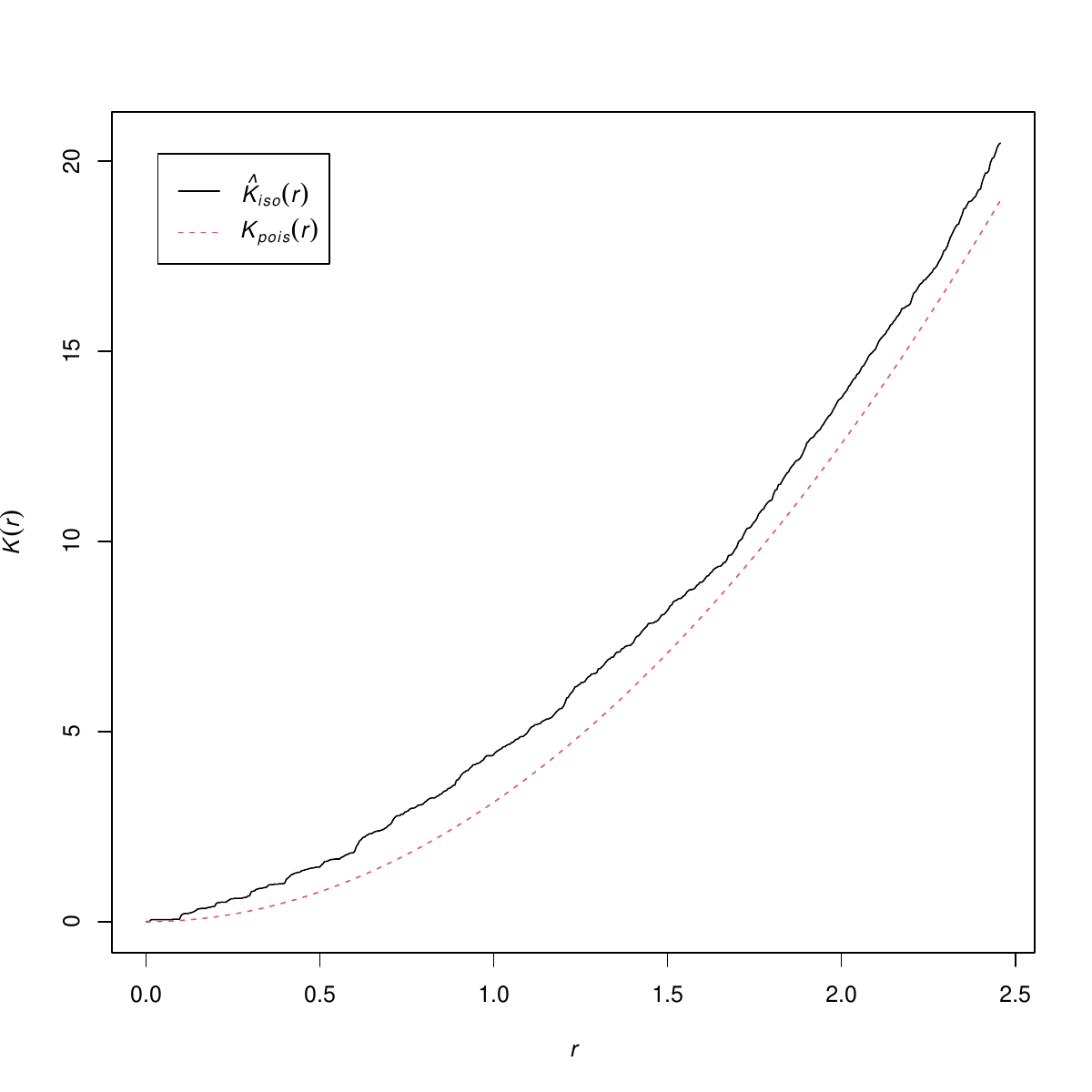}}
    	\subfloat[Significant points]
        {\includegraphics[width=.33\textwidth]{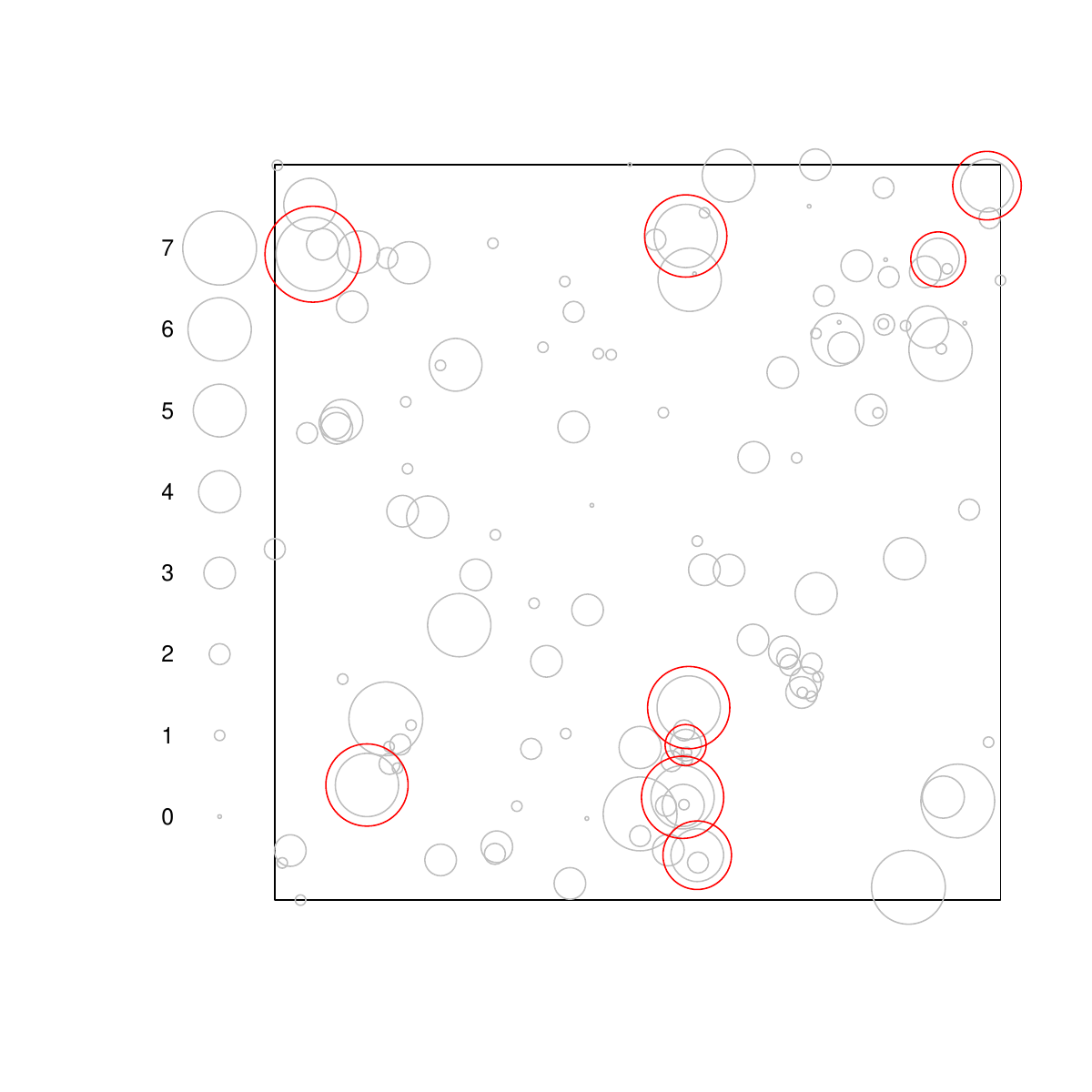}}     
	\caption{Point pattern of pine saplings in Finland marked by their diameter (a) and its observed $K$-function (b). Point pattern of pine saplings in Finland marked by their diameter (c). In red, the significant points of the local test for hypothesis $\mathrm{H}1$.}
	\label{fig:finpines}
\end{figure}

We test $\mathrm{H}1$, rejecting the null hypothesis. We therefore proceed by also testing $\mathrm{H}2$, and not rejecting $\mathcal{H}_1$, we may conclude that the pattern is overall inhomogeneous and with independent marks. Assessing  $\mathrm{H}1_L$, we get that the significant point depicted in Figure \ref{fig:finpines}(c), which likely contributed to the global rejection of the hypothesis. 

\subsection{Earthquake data}\label{sec:seismic}

We analyse a dataset of seismic events named \texttt{italycatalog} available from the  \texttt{etasFLP} (\cite{chiodi2017mixed}) R package. It contains a sample catalog of Italian earthquakes of magnitude at least 3.0 from year 2005 to year 2013. Figure 	\ref{fig:appl4}(a) displays a sample of 2158 earthquakes of the whole dataset of seismic events occurred all over Italy in that time frame.

\begin{figure}[!htb]
	\centering
	\subfloat[{Italy}]{\includegraphics[width=.4\textwidth]{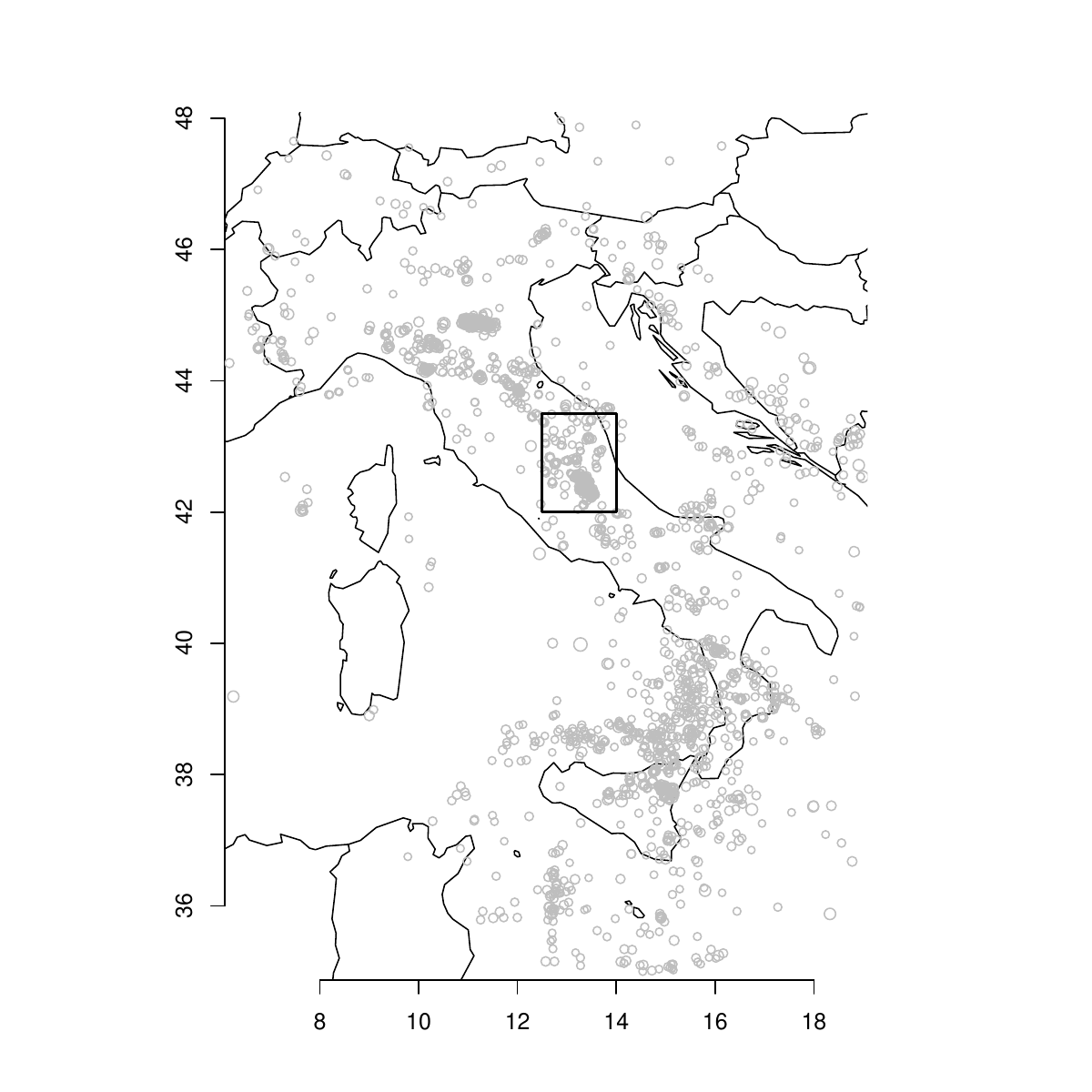}}  
	\subfloat[{Close-up to Abruzzo region}]{\includegraphics[width=.4\textwidth]{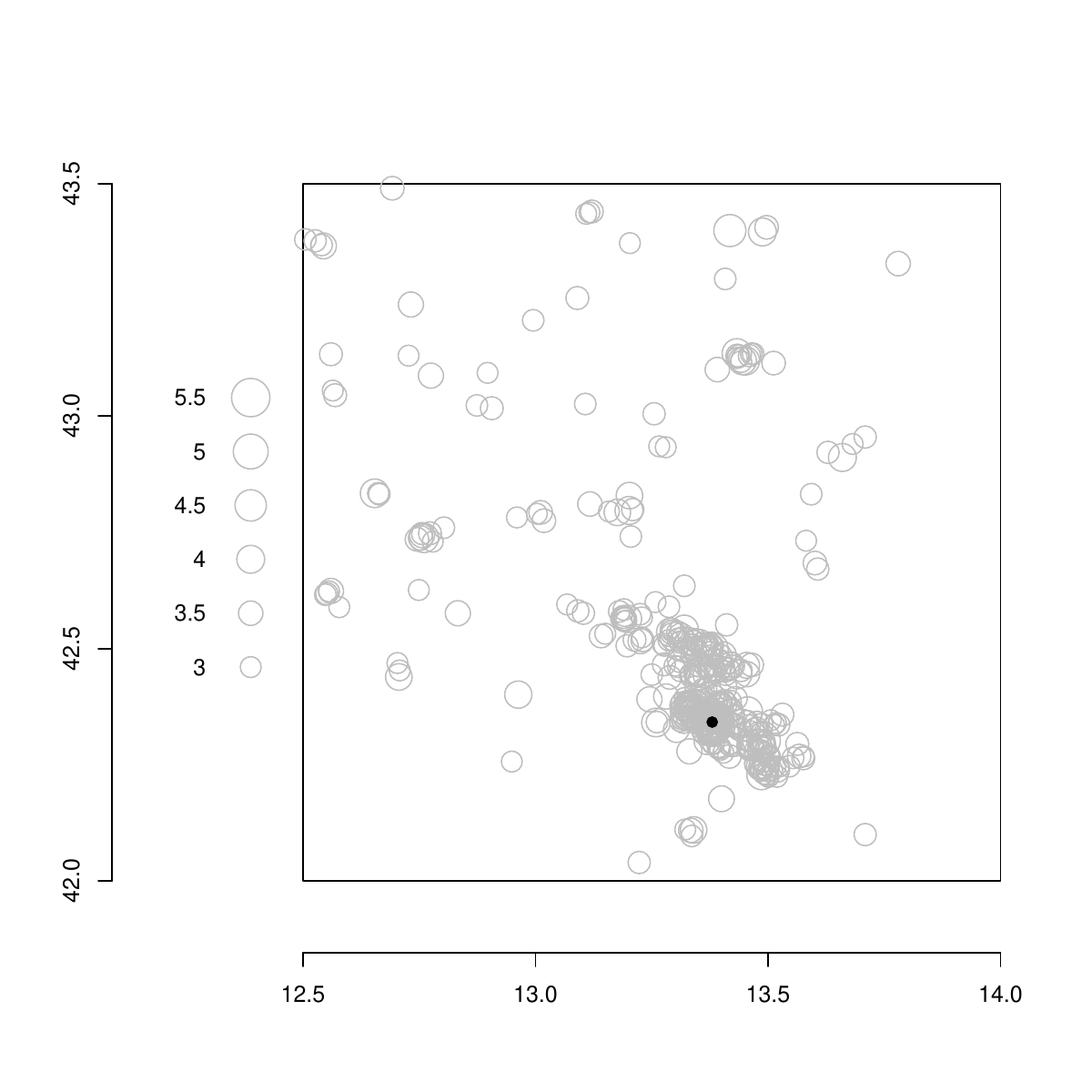}}  \\
    	\subfloat[Time]{\includegraphics[width=.4\textwidth]{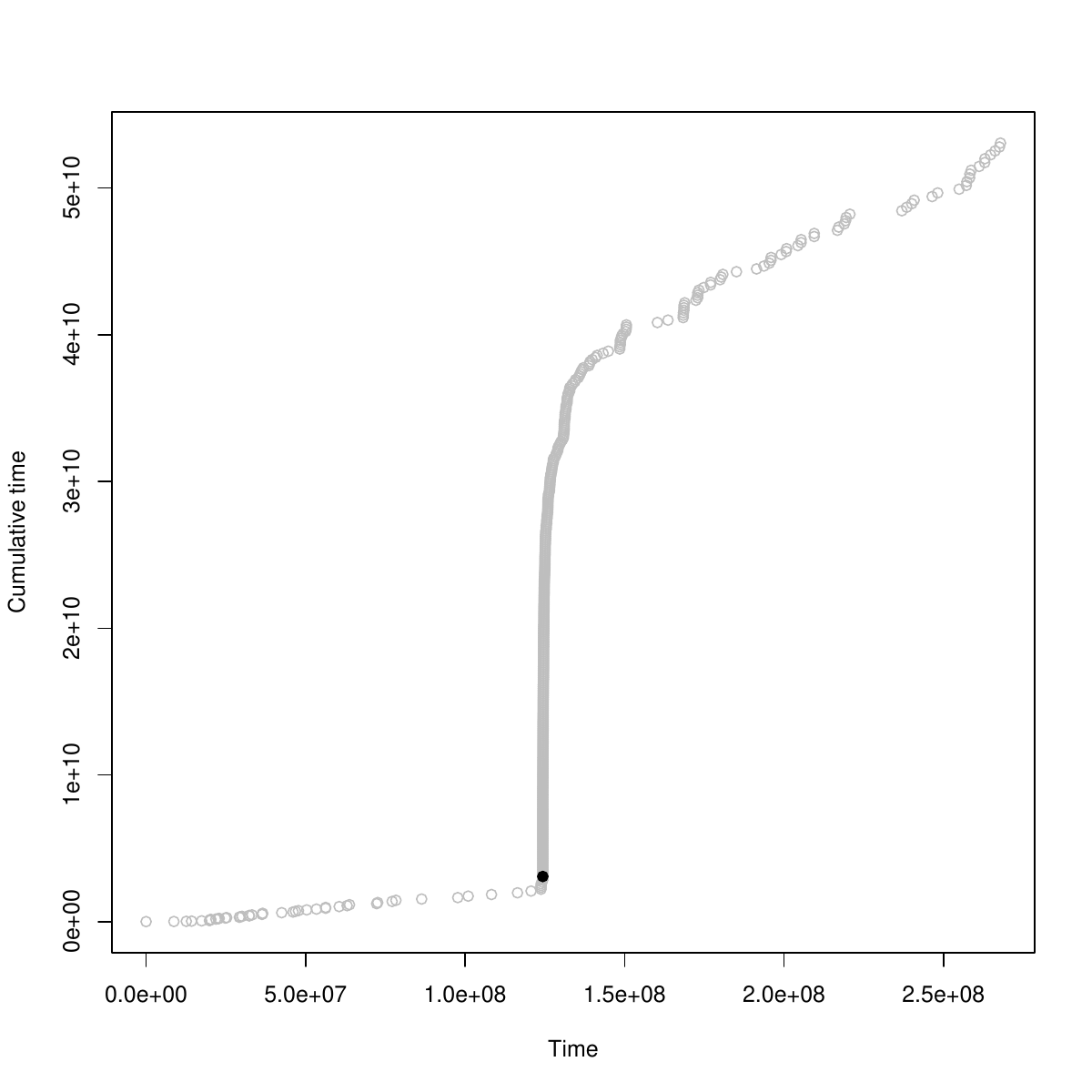}} 
	\subfloat[Unmarked $K$-{function}]{\includegraphics[width=.4\textwidth]{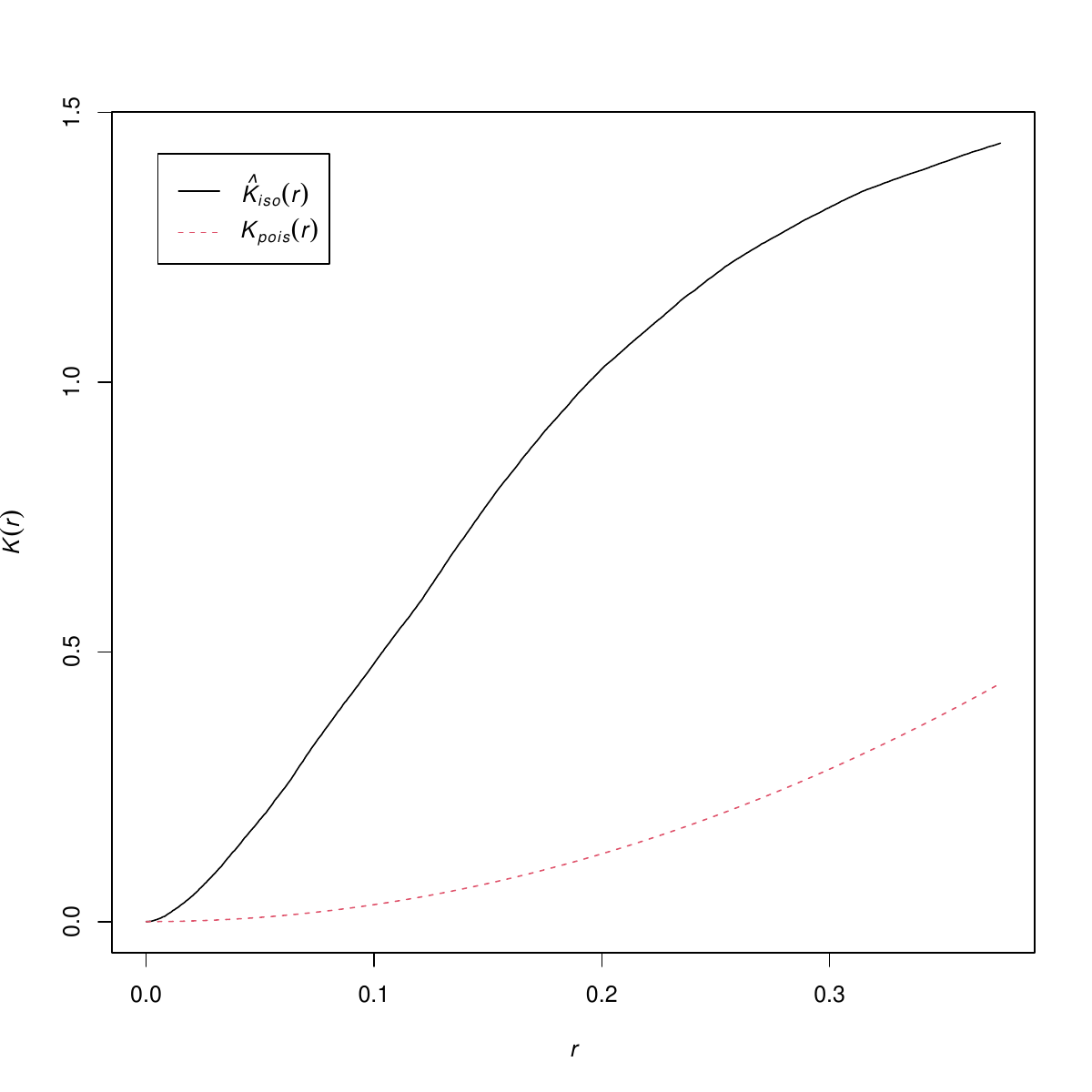}}  
	\caption{Point pattern of earthquake data marked by the magnitude in Italy (a), Abruzzo (b), and its observed $K$-function. The spatial location of the mainshock is in black.}
	\label{fig:appl4}
\end{figure}

We select the magnitude of the earthquakes as the mark of the pattern and we focus our analysis on the seismic sequence of L'Aquila that occurred in 2009.
Figure \ref{fig:appl4} (b) illustrates the Abruzzo region, containing the 2009 sequence of disruptive earthquakes. The event considered to be the mainshock occurred at 03:32 a.m. on the $6^{\text{th}}$ April 2009 with a magnitude of 5.9, and it is represented by the black point. Figure \ref{fig:appl4} (c-d) illustrates the temporal distribution of the seismic events, and the observed unmarked $K$-function.  From this, we know that the point pattern, consisting of 407 events, is highly clustered.   The spatial and temporal location of the mainshock is represented in black.

The global test based on $\mathcal{T}_1$ leads to the rejection of the null hypothesis, signifying that there is structure in both the points and the marks. This is further corroborated by the rejection of the null hypothesis of both $\mathrm{H}2$ and $\mathrm{H}3$.

Furthermore, the result of the local test $\mathrm{H}1_{L}$ provides 297  significant points, that is, the 73\% of the total. These local results, together with the rejection of the global hypotheses informs us that a marked inhomogeneous point process model should be employed to describe the observed point pattern, e.g. the ETAS model.

For each variable available in the catalog, that is, longitude, latitude, depth and time, the Kolmogorov–Smirnov test comparing the distributions of significant and non-significant points is run. This yielded p-values near zero for all variables except depth (see Table \ref{tab:ks_test}). This provides evidence against the null hypothesis that the distributions of longitude, latitude, and time are identical across the two subgroups.

\begin{table}[!htb]
\centering
\caption{Kolmogorov-Smirnov test results marginally comparing distributions of latitude, longitude, depth, and time between the two groups of significant and non-significant points according to the local test for hypothesis $\mathrm{H}1_{L}$. Significance is assessed at $\alpha = 0.05$.}
\label{tab:ks_test}

\begin{tabular}{lc}
\toprule
\textbf{Variable} & \textbf{p-value}  \\
\midrule
Longitude & $2.41 \times 10^{-3}$  \\
Latitude  & $6.77 \times 10^{-6}$  \\
Depth     & $1.81 \times 10^{-1}$  \\
Time      & $2.43 \times 10^{-3}$  \\
\bottomrule
\end{tabular}
\end{table}

Figure \ref{fig:resultsETAS} shows the distributions of the longitude, latitude, and time of the analysed earthquakes by their significance according to the local test. In red, the distributions of the significant points and in grey, the distributions of the non-significant ones. The temporal location of the mainshock is in black.

\begin{figure}[!htb]
	\centering
\subfloat{\includegraphics[width=.33\textwidth]{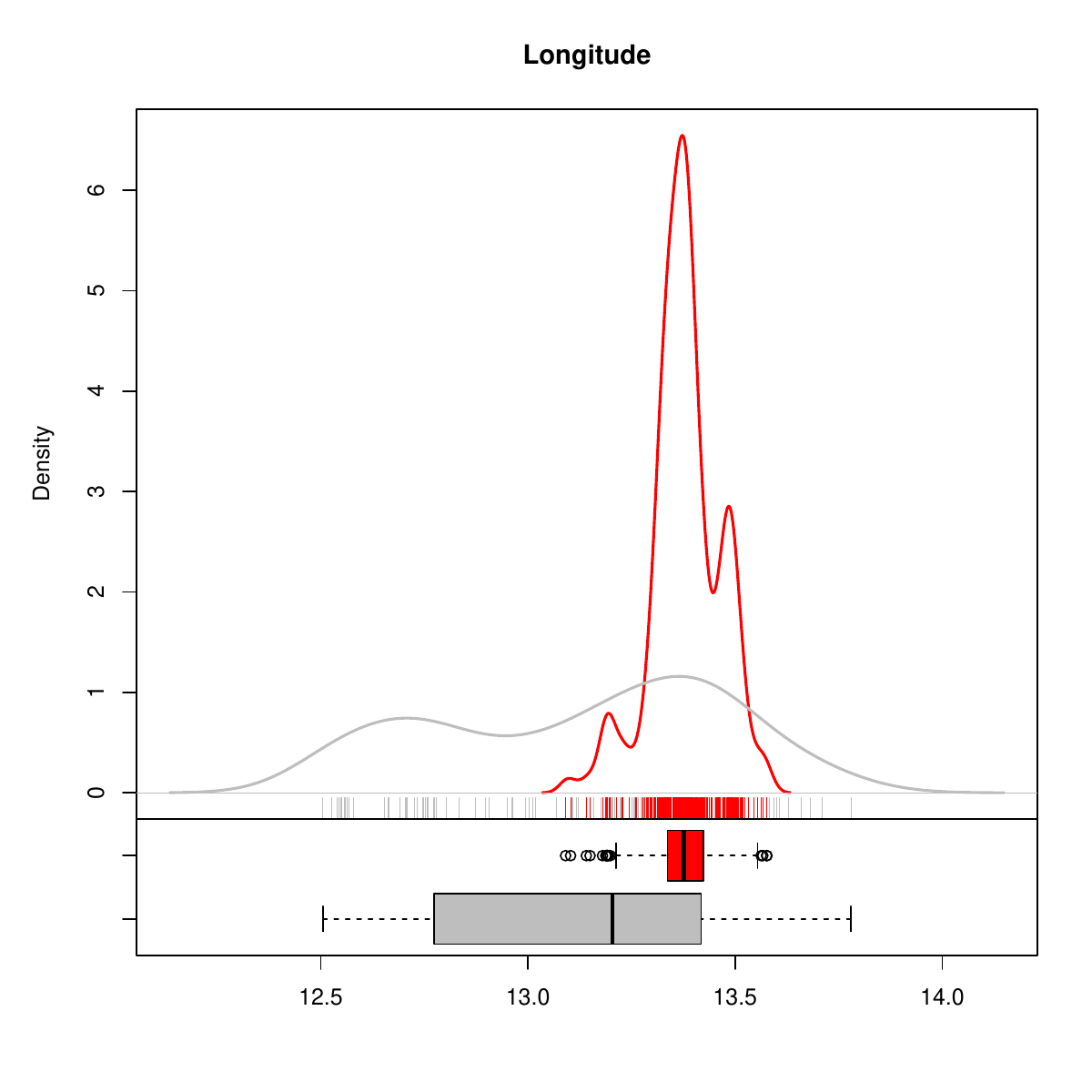}}  
\subfloat{\includegraphics[width=.33\textwidth]{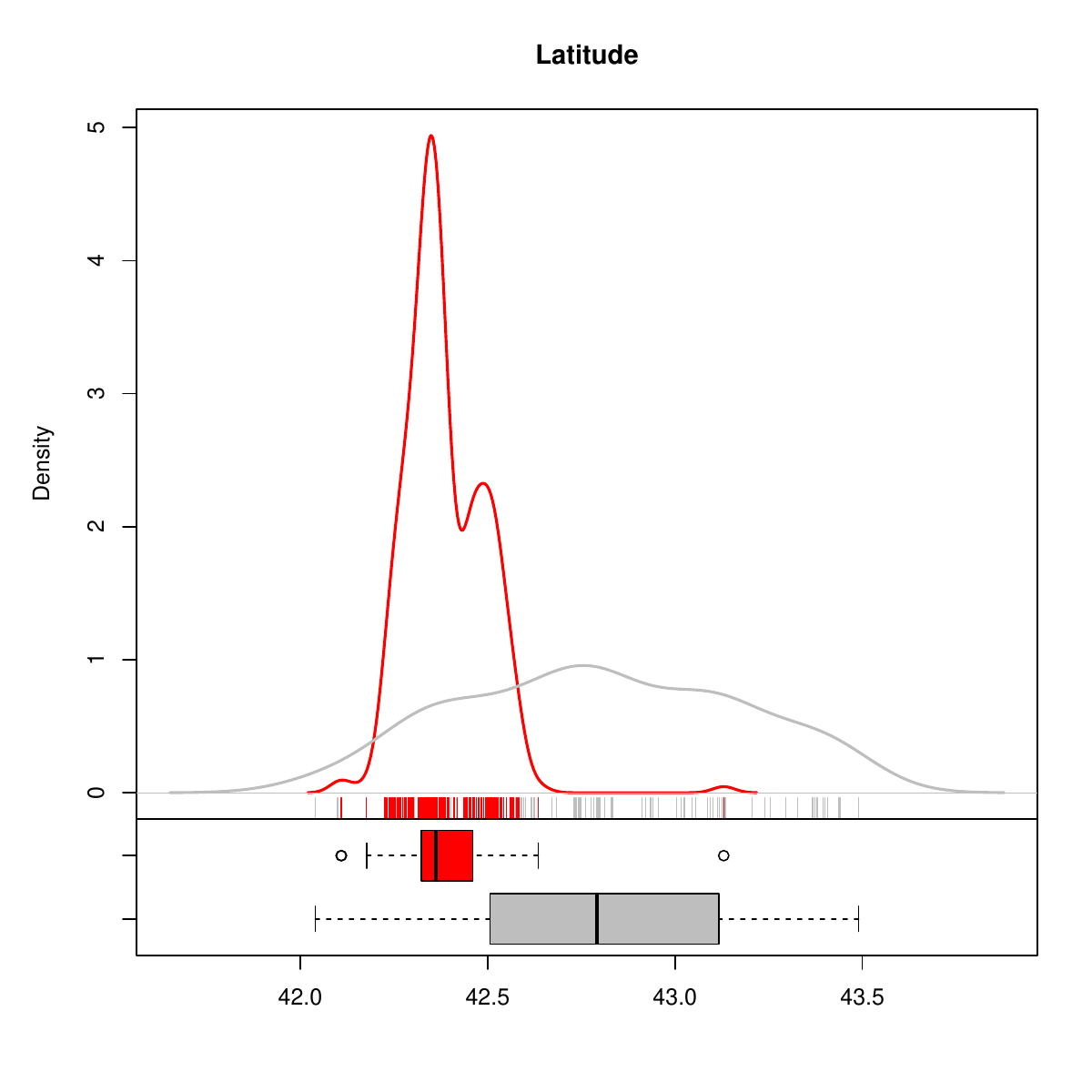}}
\subfloat{\includegraphics[width=.33\textwidth]{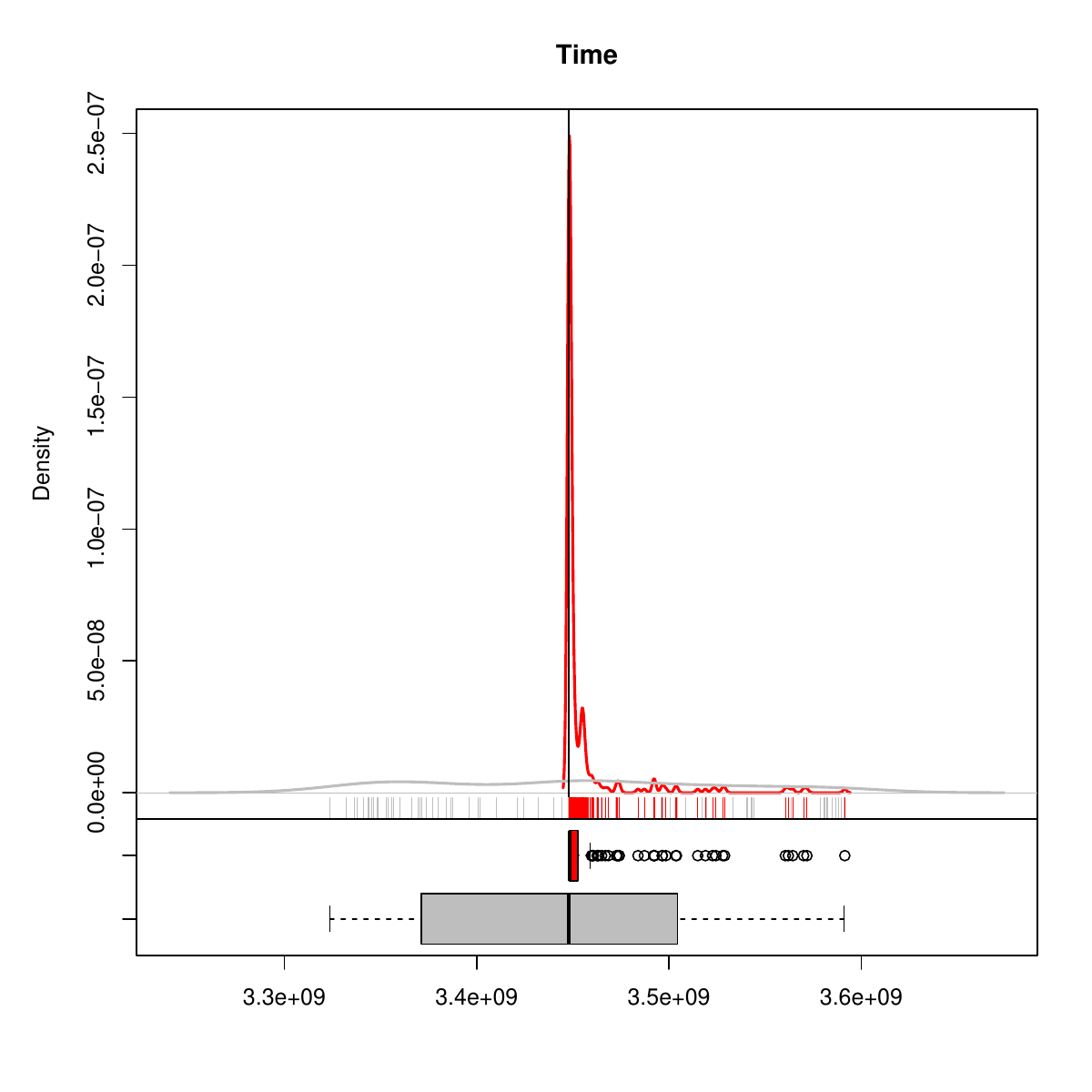}}  
	\caption{Distributions of the longitude, latitude, and time of the analysed earthquakes by their significance according to the local test. In red, the distributions of the significant points and in grey, the distributions of the non-significant ones. The temporal location of the mainshock is in black.
    }
	\label{fig:resultsETAS}
\end{figure}

Furthermore, the visual inspection of the spatio-temporal distribution of points by significance to the local test is illustrated in Figure \ref{fig:appl6}. 
In particular, panel (a) reports the spatial distribution, panel (b) shows the spatio-temporal distribution, and panel (c) depicts the depth as the third dimension. Here again, the red points indicate the significant ones while the grey ones are the non-significant ones. A spatial separation between the two groups of points is evident. The significant points correspond to seismic events that occurred mainly after the mainshock, and up to the end of the analysed time frame.

\begin{figure}[!htb]
	\centering
        \subfloat[2D]{\includegraphics[width=.5\textwidth]{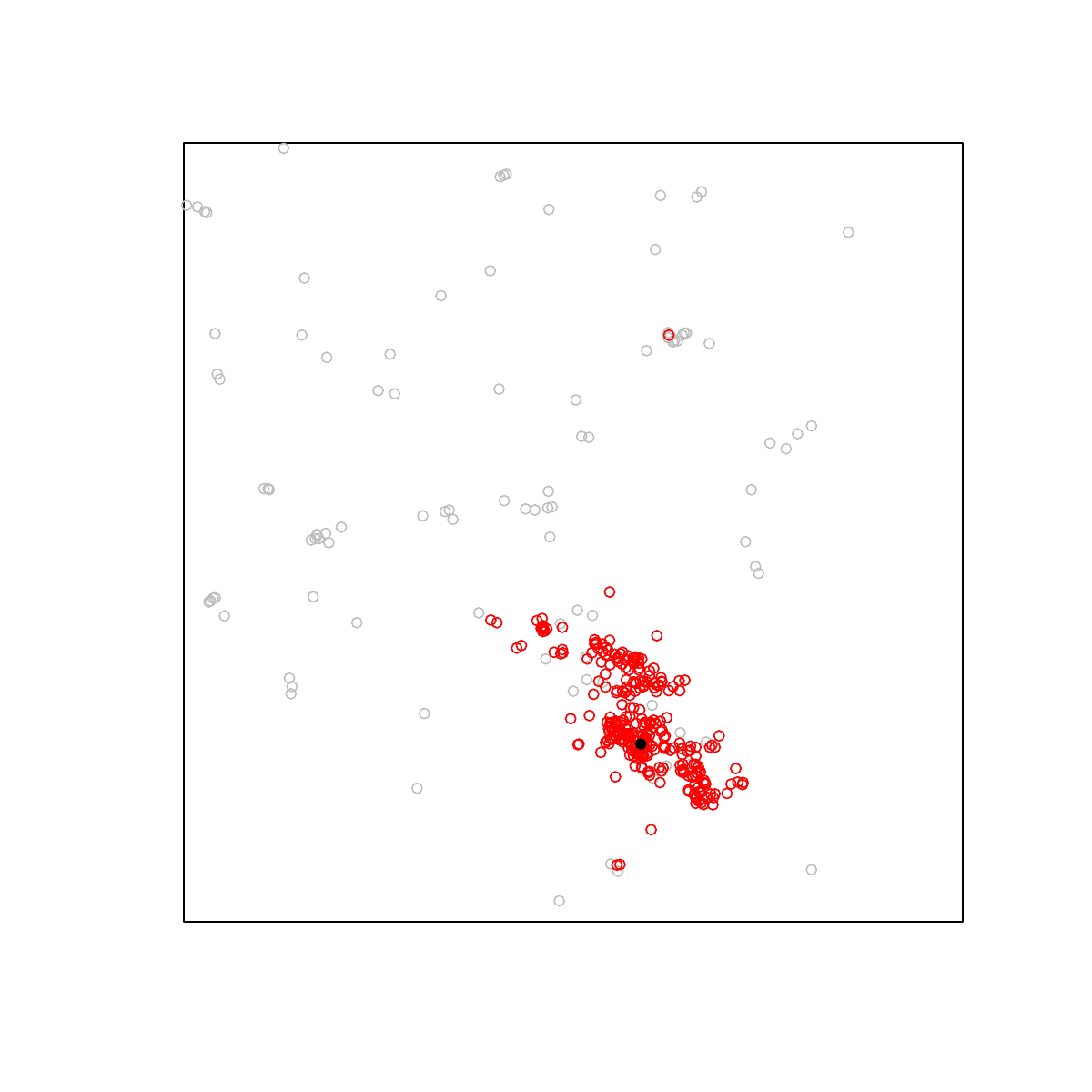}} 
\subfloat[3D time ]{\includegraphics[width=.5\textwidth]
{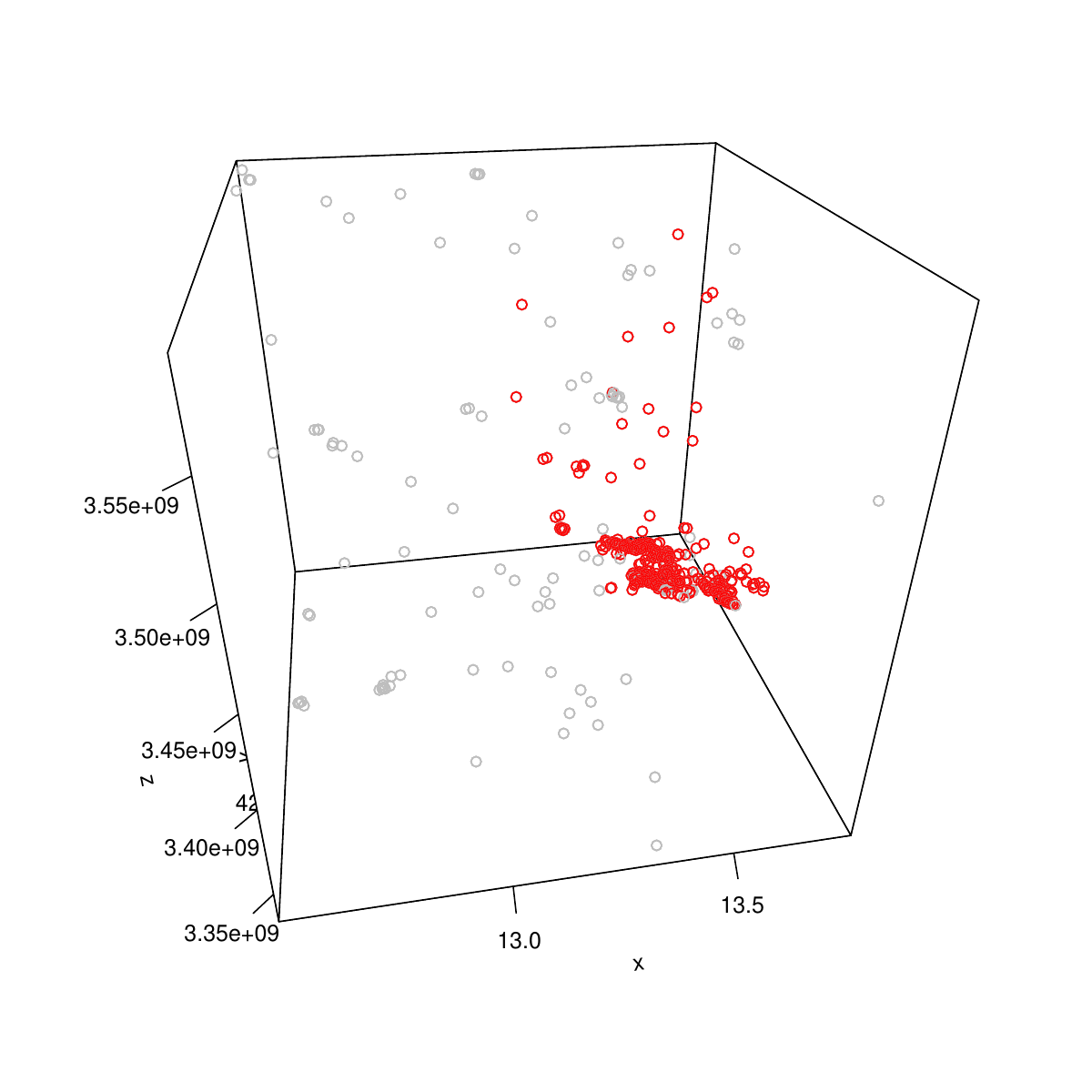}
} 
	\caption{Spatial (a) and spatio-temporal (b) distribution of the unmarked point pattern. In red, the significant points of the local test for hypothesis $\mathrm{H1}_{L}$. The spatial location of the mainshock is in black.}
	\label{fig:appl6}
\end{figure}

\section{Conclusions}\label{sec:concl}

This work introduces a comprehensive statistical framework for testing hypotheses on marked point patterns, based on global and local extensions of the mark-weighted inhomogeneous $K$-function. The proposed test statistics, formulated in the spirit of chi-squared discrepancies, enable the assessment of various hypotheses concerning the interaction between spatial locations and mark values, under both homogeneous and inhomogeneous settings.

The global test accounts for spatial inhomogeneity, effectively isolating mark-related effects from purely spatial ones. The local version offers deeper insight by identifying which points contribute most to deviations from the null hypothesis, thereby revealing localised spatial structures that would otherwise be obscured by aggregate global summaries. The simulation study demonstrates that the global test reliably discriminates among different configurations of point and mark structures, with power increasing as the number of points grows. The local test proves effective at detecting spatially localised structures, with well-controlled false positive rates and similarly improving power with sample size. Applications to real environmental datasets further confirm the practical applicability and interpretability of the proposed framework. In particular, analyses of forestry and earthquake data illustrate how the local test complements global conclusions and can guide spatially targeted modelling decisions. Several promising directions remain open for future work. First, the proposed statistics could be integrated into model-fitting strategies, for instance, the local mark-weighted $K$-function may serve as a regularisation term within penalised likelihood frameworks \citep{d2024constructed,tarantino2025modeling}, simultaneously capturing spatial and mark-related interactions without imposing restrictive assumptions on their relationship. Second, the methodology is naturally extensible to multivariate or functional marks, as well as to spatio-temporal settings. Third, the inhomogeneous mark-weighted local $K$-function, not yet fully explored here, offers potential as a diagnostic tool for fitted marked first-order intensity functions — particularly valuable in highly clustered or irregular patterns where mark dependence cannot be neglected, and where selecting an appropriate intensity estimate remains challenging.

\section*{Acknowledgement}

The authors gratefully acknowledge financial support through the German Research Association and the
ITALY. Nicoletta D'Angelo and Giada Adelfio were partially funded by the Targeted Research Funds 2026 (FFR 2026) of the University of Palermo, PRIN 2022: Spatio-temporal Functional Marked Point Processes for
probabilistic forecasting of earthquakes 2022BN7CJP P.I. Giada Adelfio. CUP B53C24006340006, and by Fondi di Ateneo – Bando EUROSTART 2025 – progetto di ricerca ASPIRE – Advances in Spatio-temporal Point process models Innovations and Research P.I. Nicoletta D'Angelo.

\bibliographystyle{chicago}
\bibliography{arXivAAA}
\end{document}